\newtheorem{theorem}{Theorem}[section]
\newtheorem{corollary}[theorem]{Corollary}
\newtheorem{lemma}[theorem]{Lemma}
\theoremstyle{definition}
\newtheorem{definition}[theorem]{Definition}
\newtheorem{example}[theorem]{Example}
\newcommand{\B}{\ensuremath{\mathcal{B}}} 
\newcommand{\zed}{\ensuremath{\mathbb{Z}}}
\renewcommand{\Pr}{\operatorname{Pr}}
\title{Splitting authentication codes with perfect secrecy: new results, constructions and connections with algebraic manipulation detection codes}
\author[1]{Maura B.~Paterson}
\author[2]{Douglas R.~Stinson}
\affil[1]{Department of Economics, Mathematics and Statistics, Birkbeck, University of London, Malet St, London WC1E 7HX, UK} 
\affil[2]{David R. Cheriton School of Computer Science, University of Waterloo, Waterloo, Ontario, N2L 3G1, Canada\footnote{D.R.~Stinson's research is supported by NSERC discovery grant RGPIN-03882}}
\begin{document}
\maketitle
%I think for the sake of clarity and simplicity it would be good to assume $k$-regularity throughout, at least initially.  We can point out later when results can be generalised and when $k$-regularity is in fact a necessary assumption.
\begin{abstract}
A splitting BIBD is  a type of combinatorial design that can be used to construct splitting authentication codes with good properties.  In this paper we show that a design-theoretic approach is useful in the analysis of more general splitting authentication codes.  Motivated by the study of algebraic manipulation detection (AMD) codes, we
 define the concept of a {\em group generated} splitting authentication code.  We show that all group-generated authentication codes have perfect secrecy, which allows us to demonstrate that algebraic manipulation detection codes can be considered to be a special case of an authentication code with perfect secrecy. 

We also investigate splitting BIBDs that can be ``equitably ordered''. These splitting BIBDs yield authentication codes with splitting that also have perfect secrecy.  We show that, while group generated BIBDs are inherently equitably ordered, the concept is applicable to more general splitting BIBDs. For various pairs $(k,c)$, we determine necessary and sufficient (or almost sufficient) conditions for the existence of $(v, k \times c,1)$-splitting BIBDs
that can be equitably ordered. The pairs for which we can solve this problem are $(k,c) = (3,2), (4,2), (3,3)$ and $(3,4)$, as well as all cases with $k = 2$. 
\end{abstract}

\section{Introduction}
The use of authentication codes for providing authentication in an unconditionally secure setting has long been studied, following models developed by Simmons \cite{Simmons84}.  Authentication codes with perfect secrecy ensure confidentiality of sources as well as authenticity.  There is a considerable literature on authentication and secrecy codes, including models that make different assumptions about the distribution of the sources \cite{Stinson88,Stinson1990}.  We observe that the majority of the focus has been on the case where, for a given key, there is a unique encoding for each source.  On the other hand, {\em Splitting authentication codes} allow multiple different encodings of a source under a specific key. Allowing splitting can facilitate better performance for certain parameter settings, and can also yield constructions that work for any source distribution.  There is also a wide literature on splitting authentication codes, including many constructions \cite{Blundo99,GMW,Huber12,Ogata2004,DeSoete91,Wang,WS}.  However, the case of splitting authentication codes with perfect secrecy has not been systematically considered.  

Our investigation of splitting authentication codes with perfect secrecy is motivated by consideration of the properties and structure of {\em algebraic manipulation detection (AMD) codes} with a view to better characterising those application contexts in which they can be usefully applied.  AMD codes were introduced by  Cramer, Dodis, Fehr, Padr\'{o} and Wichs in EUROCRYPT 2008 as a way of abstracting ideas used in the construction of robust secret sharing schemes into more general tools for providing robustness against active manipulation in cryptographic systems \cite{cramer08}.  The definitions of these objects have certain similarities with authentication codes, in that both aim to detect whether an adversary has tampered with an encoded element.  Connections noted in the literature include the use of AMD codes by Cramer et al.\ in the construction of a primitive they call a {\em KMS-MAC}, which could be viewed as a variant of an authentication code \cite{cramer08}.  However, there are also clear differences in the two definitions.  For example, authentication codes rely on the use of a shared key, whereas there are no keys involved in the definition of an AMD code.  Also, the underlying context for their use and the corresponding security definitions are different.  The definition of an authentication code is purely combinatorial, as are many of the known constructions, whereas an AMD code inherently requires the algebraic structure of an abelian group.  

In Section~\ref{sec:splittingsetsystem} of this paper we connect the combinatorial and algebraic perspectives by taking a design-theoretic approach to studying splitting authentication codes, with a particular focus on their automorphism groups.  We introduce the notion of a {\em group generated authentication code,} and show that the property of being group generated is sufficient to ensure the authentication code has perfect secrecy, and it also gives other desirable properties such as optimal protection against impersonation attacks.  We clarify the relationship between authentication codes and AMD codes by demonstrating that, in terms of their mathematical structure, an AMD code is a special case of a group generated authentication code, with weak AMD codes corresponding to authentication codes that require a uniform distribution on the sources and strong AMD codes corresponding to authentication codes that work for any source distribution.  We discuss the consequences of this connection for our understanding of AMD codes.

In Section~\ref{sec:equitableordering} we consider perfect secrecy for certain optimal authentication codes that are not necessarily group generated.  Splitting BIBDs are a type of combinatorial design that give rise to splitting authentication codes that are optimal with respect to certain bounds on the adversary's success probability in substitution attacks.  In Section~\ref{sec:equitableordering} we define the {\em equitable ordering} property for splitting BIBDs, which guarantees that the corresponding authentication codes offer perfect secrecy.  We give techniques to provide equitable ordering for splitting BIBDs with a range of parameters, which permits the conversion of a wide class of splitting authentication codes into splitting authentication codes with perfect secrecy.

%\subsection{Definitions}
%{\em I'm going to rewrite/flesh out this section; it's currently largely a placeholder to set up the notation.}
\subsection{Authentication Codes}
An {\em authentication code} is a \makebox{4-tuple} $(\mathcal{S},\mathcal{T},\mathcal{K},\mathcal{E})$ where $\mathcal{ S}$ is a finite set of {\em sources}, the set $\mathcal{ T}$ is a finite set of {\em messages}, the set $\mathcal{ K}$ is a finite set of {\em keys} and $\mathcal{ E}$ is a set of {\em encoding rules}.    The encoding rules are (possibly randomised) maps from $\mathcal{S}$ to $\mathcal{T}$ that are indexed by the keys in $\mathcal{K}$.  We use the notation $e_k(s)\subset \mathcal{ T}$ to denote the set of possible encodings of source $s$ under the encoding rule $e_k$.  Note that, for distinct sources $s$ and $s^\prime$, we require $e_k(s)\cap e_k(s^\prime)=\emptyset$ for each $k\in\mathcal{ K}$; in practical terms, this means that knowledge of $k$ enables the unique identification of the source from the encoding.  We assume that the keys are drawn uniformly at random from $\mathcal{ K}$, independently of $s$.  A sender who shares a key $k\in \mathcal{ K}$ with a receiver authenticates a source value $s\in \mathcal{ S}$ by calculating a message $t\in e_k(s)$ and transmitting it to the receiver.  The receiver accepts the message as authentic if $t\in e_k(s)$.  
\begin{example}\label{ex:firstacode}
Let $\mathcal{ S}=\{0,1\}$ and $\mathcal{ K}=\{0,1,2,3,4\}$. We can define an authentication code by means of the following table.  To generate an encoding for source $s$ and key $k$, we choose one of the two entries in the corresponding row/column uniformly at random.
\begin{equation*}
\begin{array}{rcc}
\hline
k& e_k(0)&e_k(1)\\
\hline
0&\{1,4\}&\{2,3\}\\
1&\{2,0\}&\{3,4\}\\
2&\{3,1\}&\{4,0\}\\
3&\{4,2\}&\{0,1\}\\
4&\{0,3\}&\{1,2\}\\
\hline
\end{array}
\end{equation*}
If a receiver possesses the key $3$, for example, then they would accept the message $2$ as being an authentic encoding of the source $0$.  However, if they received the message $3$ they would reject this as being inauthentic.
\end{example}

If $|e_{k}(s)|=1$ for all $k\in\mathcal{ K}$, $s\in\mathcal{ S}$, then the authentication code is {\em deterministic}, otherwise it is said to be a {\em splitting authentication code}.  If $|e_{k}(s)|=c$ for all $k\in\mathcal{ K}$, $s\in\mathcal{ S}$, then we say the authentication code is {\em $c$-splitting}. In the case where the encoding of each source $s$ under any encoding rule $e_k$ is chosen uniformly from the messages in $e_k(s)$ the authentication code is said to have {\em equiprobable encoding}.  For instance, the authentication code described in Example~\ref{ex:firstacode} is a 2-splitting authentication code with equiprobable encoding.  For all authentication codes considered in this paper, we assume we have equiprobable encoding.  

There are several relevant probability distributions associated with an authentication code.  There is the distribution on the sources; in some circumstances we consider the case where this distribution is uniform, although we also consider authentication codes with arbitrary source distributions.  There is the distribution on the keys, which is generally assumed to be uniform and independent of the source distribution.  Additionally there is the distribution associated with encoding of a source $s$ under a key $k$, which we assume is uniform.  Finally, there is the resulting distribution induced on the space of messages.  For a key $k$, source $s$ and message $t$, the probability that the message $t$ results from encoding source $s$ under key $k$ can be expressed as
\begin{align*}
\Pr(k,s,t)&=\Pr(s)\Pr(k)\Pr(t|k,s),\\
&=\frac{\Pr(s)}{|\mathcal{ K}||e_k(s)|}. \intertext{For a $c$-splitting authentication code this becomes}
\Pr(k,s,t)&=\frac{\Pr(s)}{c|\mathcal{ K}|}.
\end{align*}

An adversary who has seen a valid message $t\in e_k(s)$ can try and trick the receiver into accepting as valid a different message $t^\prime$.  This attack is known as {\em substitution}, and it succeeds if $t^\prime\in e_k(s^\prime)$ for some source $s^\prime\neq s$.  It is desirable to construct authentication codes for which the  probability of a successful substitution attack is as small as possible.  We assume that the adversary is aware of the distribution from which the source is drawn, and in response they choose a {\em substitution strategy} $\sigma$ that consists of a choice of replacement message $\sigma(t)$ for each possible message $t\in\mathcal{ T}$.

%\begin{definition}\label{def:sub}
Let $\sigma$ be a substitution strategy for attacking an authentication code $(\mathcal{ S},\mathcal{ T},\mathcal{ K},\mathcal{ E})$. If the key is $k\in\mathcal{ K}$ and the source is $s\in \mathcal{ S}$, then the  adversary's strategy succeeds whenever the message is a value $t$ from the set 
\begin{align*}
X^\sigma_{k,s}=\{t\in e_k(s): \sigma(t)\in e_k(s^\prime)\text{ for some }s^\prime\neq s\}.  
\end{align*}
The overall success probability $\epsilon_\sigma$ of the strategy $\sigma$ is given by
\begin{align}
\epsilon_\sigma&=\sum_{k\in \mathcal{ K}}\sum_{s\in \mathcal{ S}}\sum_{t\in X^\sigma_{k,s}}\Pr(k,s,t). \label{eq:subsuc}
\end{align}
The authentication code is said to have {\em substitution probability } at most $\epsilon$ if $e_\sigma\leq \epsilon$ for every strategy $\sigma$.
%\end{definition}
We observe that the expression in (\ref{eq:subsuc}) can be written as follows:
\begin{align}
\epsilon_\sigma=\sum_{k\in \mathcal{ K}}\sum_{s\in \mathcal{ S}}\frac{|X^\sigma_{k,s}|\Pr(s)}{|\mathcal{ K}||e_k(s)|}.\label{eq:gensub}
\end{align}
In \cite{Blundo99} it was shown that the substitution probability $\epsilon$ is at least
\begin{align}
\min_{k\in \mathcal{ K}} \frac{\left|\bigcup_{s\in\mathcal{ S}}e_k(s)\right|-\max_{s\in\mathcal{ S}}|e_k(s)|}{|\mathcal{ T}|-1}. \label{eq:optimalsub}
\end{align}
(This was a correction of a result from \cite{DeSoete91}.)  An authentication code for which this bound is satisfied is said to have {\em optimal substitution probability.}
\begin{example}
Consider the authentication code of Example~\ref{ex:firstacode}.  As this is a 2-splitting authentication code with 5 keys and equiprobable encoding, the success probability of a substitution strategy $\sigma$ is given by
\begin{align*}
\epsilon_\sigma&=\sum_{k\in \mathcal{ K}}\sum_{s\in \mathcal{ S}}\frac{|X^\sigma_{k,s}|\Pr(s)}{10}.
%&=\frac{1}{10}\sum_{s\in \mathcal{ S}}\sum_{k\in \mathcal{ K}}\Pr(s)|X^\sigma_{k,s}|.
\end{align*}
We first observe that for any $t\in \mathcal{ T}$, if $\sigma(t)=t$ then it is the case that $t\notin X^\sigma_{k,s}$ for any choice of $k$ or $s$.  Consider now the element $0\in \mathcal{ T}$.  If $\sigma(0)=1$, then $0\in X^\sigma_{4,0}$ and $0\in X^\sigma_{2,1}$ but $0\notin X^\sigma_{k,s}$ for any other choice of $k$ and $s$.  Similarly, for any other nonzero choice of $\sigma(0)$, we can check that there is one value of $k$ with $0\in X^\sigma_{k,0}$ and one value of $k$ with $0\in X^\sigma_{k,1}$.  The same holds true for every other element $t$ of $\mathcal{ T}$: if $\sigma(t)\neq t$ then $t\in X^\sigma_{k,0}$ for precisely one value of $k$, and $t\in X^\sigma_{k,1}$ for precisely one value of $k$.  Thus for any strategy $\sigma$ it is the case that $\sum_{k\in\mathcal{ K}}|X^\sigma_{k,0}|\leq 5$, and also $\sum_{k\in\mathcal{ K}}|X^\sigma_{k,1}|\leq 5$.
Hence we have
\begin{align*}
\epsilon_\sigma&=\frac{1}{10}\sum_{s\in \mathcal{ S}}\sum_{k\in \mathcal{ K}}\Pr(s)|X^\sigma_{k,s}|,\\
&=\frac{1}{10}\sum_{k\in \mathcal{ K}}\left(\Pr(0)|X^\sigma_{k,0}|+\Pr(1)|X^\sigma_{k,1}|\right),\\
&\leq\frac{1}{10}\left(5\Pr(0)+5\Pr(1)\right),\\
&=\frac{1}{2}.
\end{align*}
Hence $\epsilon_\sigma\leq1/2$ for any $\sigma$, and we note further that $\epsilon_\sigma=1/2$ for any strategy $\sigma$ that satisfies $\sigma(t)\neq t$ for all $t\in \mathcal{ T}$.  This holds true for any source distribution.

If we consider (\ref{eq:optimalsub}) for this authentication code we have
\begin{align*}
\epsilon\geq \frac{4-2}{5-1}=\frac{1}{2},
\end{align*}
and so this authentication code has optimal substitution probability.
\end{example}

%{\it Note that in the literature it's more standard to define a precise value for the substitution probability rather than talk in terms of an upper bound on it.  I'm taking this approach here because it fits with how AMD codes are defined in the literature.  I have not defined impersonation or higher order spoofing as analogous attacks do not appear in the AMD code literature; we may like to consider at some point whether they make sense in that context.}

Another attack considered in the literature is that of {\em impersonation}, in which an adversary who has not seen any transmitted messages sends a message to the receiver in the hopes that it will be accepted as valid.  The probability that an adversary sending message $t$ succeeds is given by 
\begin{align}
\frac{\left|\{k\in\mathcal{ K}: t\in \bigcup_{s\in\mathcal{ S}}e_k(s)\}\right|}{|\mathcal{ K}|},\label{eq:imp1}
\end{align}
and the {\em impersonation probability} of the authentication code is the maximum over all $t$ of these success probabilities.  Simmons observed in \cite{Simmons84} that the impersonation probability of an authentication code is at least 
\begin{align} 
\min_{k\in \mathcal{ K}} \frac{\left|\bigcup_{s\in \mathcal{ S}}e_k(s)\right|}{|\mathcal{ T}|}.\label{eq:imp2}
\end{align}
An authentication code that meets this bound is said to have {\em optimal impersonation probability.}
\begin{example}
For the authentication code of Example~\ref{ex:firstacode} we observe that \begin{align*}|\{k\in \mathcal{ K}:t\in\bigcup_{s\in\mathcal{ S}}e_k(s)\}|=4\end{align*} for any choice of $t$, hence the impersonation probability is $4/5$.  This is in fact optimal, as $|\bigcup_{s\in \mathcal{ S}}e_k(s)|=4$, so the expression in (\ref{eq:imp2}) also evaluates to $4/5$.
\end{example}

\begin{definition} 
An authentication code $(\mathcal{ S},\mathcal{ T},\mathcal{ K},\mathcal{ E})$ has {\em perfect secrecy} if the message $t$ reveals no information about the source $s$, that is if
\begin{align*}
\Pr(s\mid t)=\Pr(s),
\end{align*}
for all $t\in \mathcal{ T}$ and $s\in \mathcal{ S}$.
\end{definition}

While generalisations of these notions where the adversary sees more than one message have been considered in the literature, e.g.\ \cite{Stinson1990}, in this paper we restrict our attention to the case where the adversary sees a single message.
\subsection{AMD codes}
An {\em algebraic manipulation detection code} (AMD code) is a \makebox{4-tuple} $(\mathcal{ S},\mathcal{ G},{ A},{ E})$, where $\mathcal{S}$ is a finite set of {\em sources}, $\mathcal{ G}$ is a finite additive group, ${A}\subset\mathcal{ G}$ is a set of {\em  valid encodings} and $E\colon \mathcal{ S}\rightarrow { A}$ is a (possibly randomised) encoding rule \cite{oldamd}.  We use the notation $A(s)\subset\mathcal{ G}$ to denote the set of valid encodings of source $s\in S$, and we require $A(s)\cap A(s^\prime)=\emptyset$ whenever $s\neq s^\prime$.  We have ${ A}=\cup_{s\in \mathcal{ S}}A(s)$, and we will often use the notation $\mathcal{ A}$ to denote the collection of disjoint subsets of $\mathcal{ G}$ given by $\{A(s): s\in\mathcal{ S}\}$.  We set $a_s=|A(s)|$ and $a=|A|=\sum_{s\in\mathcal{ S}}a_s$.  %If $a_s=k$ for all $s\in\mathcal{ S}$ then the AMD code is said to be {\em $k$-regular}.  A $1$-regular AMD code is {\em deterministic}. 

A user selects a source $s\in \mathcal{ S}$ randomly according to a distribution that is known to the adversary then the encoding rule $E$ is used to encode $s$ as an element $g\in A(s)$.  If $g$ is chosen uniformly at random from $A(s)$, then the AMD code is said to have {\em equiprobable encoding}. Throughout this paper we assume all AMD codes we consider have equiprobable encoding.

\begin{example}\label{ex:firstamd}
Let $\mathcal{ S}=\{0,1\}$, let $\mathcal{ G}=\mathbb{Z}_9$, and let $\mathcal{ A}=\{\{0,1\},\{2,4\}\}$, so $A=\{0,1,2,4\}$.  We can construct an AMD code $(\mathcal{ S},\mathcal{ G},A,E)$ by defining an encoding rule $E$ that encodes the source $0$ as either $0$ or $1$, each with probability $1/2$, and encodes the source $1$ as either $2$ or $4$, each with probability $1/2$.  We typically refer to $\mathcal{ A}$ as an AMD code, since $E$ is implied once we assume equiprobable encodings.
\end{example}

An adversary selects an element $\Delta \in \mathcal{ G}$ to be added to $g$.  The user accepts $g+\Delta$ if it is a valid encoding of some source, that is, if $g+\Delta\in A(s^\prime)$ for some $s^\prime\in\mathcal{ S}$, in which case it is decoded to $s^\prime$.  The adversary wins if $s^\prime \neq s$, that is if their algebraic manipulation has succeeded in causing the user to decode the stored value incorrectly. Given a source $s\in \mathcal{ S}$ and an element $\Delta\in\mathcal{ G}$, define the set $X^\Delta_s$ to be 
\begin{align*}
X^\Delta_s=\{g\in A(s): g+\Delta\in A(s^\prime) \text{ for some }s^\prime\neq s\}.  
\end{align*}
Then the probability that an adversary who chooses $\Delta$ succeeds is
\begin{align*}
\epsilon_\Delta=\sum_{s\in \mathcal{ S}}\sum_{g\in X^\Delta_s} \Pr(s,g).
\end{align*}        

We observe that $\Pr(s,g)=\Pr(s)\Pr(g\mid s)$, and that this is equal to $\Pr(s)|A(s)|^{-1}$ as we have equiprobable encodings. This allows us to express $\epsilon_\Delta$ as
\begin{align*}
\epsilon_\Delta=\sum_{s\in \mathcal{ S}}\frac{|X_s^\Delta|\Pr(s)}{|A(s)|}.
\end{align*}

\begin{definition}
An AMD code with $|\mathcal{ S}|=m$ and $|\mathcal{ G}|=n$ is referred to as a {\em weak $(m,n,\epsilon)$-AMD code} if an adversary who does not know the source has success probability at most $\epsilon$ in the case where the sources are uniformly distributed.  Here we have
\begin{align*}
\epsilon_{\Delta}=\sum_{s\in \mathcal{ S}}\frac{|X_s^\Delta|}{m|A(s)|},
\end{align*}
and we require $\epsilon_\Delta\leq \epsilon$ for all $\Delta\in \mathcal{ G}^*$.
\end{definition}

\begin{example}
Consider the AMD code of Example~\ref{ex:firstamd}, and suppose an adversary chooses $\Delta=1$.  Then $X_0^1=\{1\}$ and $X_1^1=\emptyset$, so $\epsilon_1=\frac{1}{4}$.  Similar calculations show that in fact $\epsilon_\Delta=1/4$ for each $\Delta\in \mathbb{Z}
_9^*$, and so this is a weak $(2,9,1/4)$-AMD code.
\end{example}

\begin{definition}
An AMD code with $|\mathcal{ S}|=m$ and $|\mathcal{ G}|=n$ is a {\em strong $(m,n,\epsilon)$-AMD code} if the success probability of an adversary who knows the source is at most $\epsilon$.  Let $\epsilon_{s,\Delta}$ be the success probability of an adversary who selects the element $\Delta$, conditioned on the event that the source is $s$.  Then
\begin{align*}
\epsilon_{s,\Delta}=\frac{|X_s^\Delta|}{|A(s)|},
\end{align*}
and we require $\epsilon_{s,\Delta}\leq \epsilon$ for all $s\in \mathcal{ S}$ and $\Delta\in \mathcal{ G}^*$.
\end{definition}
\begin{example}
The AMD code of Example~\ref{ex:firstamd} has $\left|X_s^\Delta\right|\leq 1$ for each $s\in \mathcal{ S}$ and $\Delta\in \mathcal{ G}^*$, which implies $\epsilon_{s,\Delta}\leq 1/2$.  Thus it is a strong $(2,9,1/2)$-AMD code.
\end{example}
An $(m,n,\epsilon)$-AMD code is said to be {\em $c$-regular} if it has equiprobable encoding and $a_s=c$ for all $s\in\mathcal{ S}$.  A $1$-regular AMD code is {\em deterministic}. We observe that a deterministic AMD cannot be a strong $(m,n,\epsilon)$-AMD code for any $\epsilon<1$, since an adversary who knows the source and knows the encoding of the source (due to the fact the encoding is deterministic) has enough information to pick a value of $\Delta$ that will succeed.

We observe that, while it would also be possible to study AMD codes with a specified distribution on the sources that is not the uniform distribution, this has not been considered in the literature.  However, this notion does lead naturally to an alternative interpretation of strong AMD codes: rather than assume a model where the adversary knows the value of the source, we could instead view strong AMD codes as being ones that work for {\em any} source distribution, as demonstrated in the following theorem:
\begin{theorem}
An AMD code $(\mathcal{ S},\mathcal{ G},\mathcal{ A},{ E})$ is a {strong $(m,n,\epsilon)$-AMD code} if and only if the success probability of an adversary is at most $\epsilon$ for any choice of source distribution.
\end{theorem}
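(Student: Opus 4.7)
The plan is to prove both directions directly from the defining formula
\[
\epsilon_\Delta = \sum_{s \in \mathcal{S}} \frac{|X_s^\Delta|\Pr(s)}{|A(s)|} = \sum_{s \in \mathcal{S}} \Pr(s)\,\epsilon_{s,\Delta},
\]
which expresses the adversary's overall success probability as a convex combination of the conditional success probabilities $\epsilon_{s,\Delta}$. This identity makes the theorem essentially an exercise in moving a supremum across a convex combination, so neither direction should require significant technical machinery.

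For the forward implication, I would assume the code is a strong $(m,n,\epsilon)$-AMD code, so that $\epsilon_{s,\Delta} \leq \epsilon$ for every $s \in \mathcal{S}$ and every $\Delta \in \mathcal{G}^*$. For an arbitrary source distribution $\Pr$ on $\mathcal{S}$, the identity above yields
\[
\epsilon_\Delta \;\leq\; \epsilon \sum_{s \in \mathcal{S}} \Pr(s) \;=\; \epsilon
\]
for each $\Delta \in \mathcal{G}^*$, which is exactly the bound required.

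For the reverse implication, I would argue by contrapositive (or equivalently by specialising the distribution). Suppose the adversary's success probability is bounded by $\epsilon$ under every choice of source distribution. Fix an arbitrary $s_0 \in \mathcal{S}$ and consider the point-mass distribution $\Pr(s_0) = 1$ with $\Pr(s) = 0$ otherwise. Then for any $\Delta \in \mathcal{G}^*$ the formula collapses to $\epsilon_\Delta = \epsilon_{s_0,\Delta}$, and the hypothesis gives $\epsilon_{s_0,\Delta} \leq \epsilon$. Since $s_0$ and $\Delta$ were arbitrary, this is exactly the strong $(m,n,\epsilon)$-AMD condition.

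The only subtle point, and the one I would flag carefully in the write-up, is the interpretation of ``the adversary'' in the hypothesis of the reverse direction: the adversary is assumed to know the distribution on sources (but not the realised value of $s$) and to choose $\Delta$ accordingly. When the distribution is a point mass at $s_0$, knowing the distribution is operationally the same as knowing the source, which is what lets the point-mass argument recover the strong AMD bound. Once this interpretational point is made explicit, there is no real obstacle — the proof reduces to the two lines above.
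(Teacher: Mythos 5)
Your proposal is correct and follows essentially the same route as the paper's own proof: the forward direction bounds $\epsilon_\Delta$ by averaging the conditional bounds $\epsilon_{s,\Delta}\leq\epsilon$ over the source distribution, and the reverse direction specialises to the point-mass distribution concentrated on a single source. Your explicit remark that a point-mass distribution makes ``knowing the distribution'' equivalent to ``knowing the source'' is a helpful clarification of a step the paper leaves implicit, but it does not change the argument.
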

\begin{proof}
Suppose an adversary's success probability against an AMD code $(\mathcal{ S},\mathcal{ G},\mathcal{ A},{ E})$ is at most $\epsilon$ for any source distribution.  Then the adversary's success probability is at most $\epsilon$ for the distribution in which source $s$ is chosen with probability 1, for any $s\in S$.  Hence it is a {strong $(m,n,\epsilon)$-AMD code}.

Conversely, suppose $(\mathcal{ S},\mathcal{ G},\mathcal{ A},{ E})$ is a  {strong $(m,n,\epsilon)$-AMD code}.  Then for all $s\in\mathcal{ S}$ and $\Delta\in \mathcal{ G}$ we have
\begin{align*}
\frac{|X_s^\Delta|}{|A(s)|}\leq \epsilon.
\end{align*}
Let the sources be chosen according to a distribution that chooses source $s\in \mathcal{ S}$ with probability $\Pr(s)$.  Then for any $\Delta\in \mathcal{ G}$ we have
\begin{align*}
\epsilon_\Delta&=\sum_{s\in \mathcal{ S}}\frac{\Pr(s)|X_s^\Delta|}{|A(s)|},\\
&\leq \sum_{s\in \mathcal{ S}}\Pr(s)\epsilon,\\
&=\epsilon, 
\end{align*} 
as required.
\end{proof}
This definition of strong security for an AMD code is stronger than the corresponding notions of security against substitution against an authentication code: in (\ref{eq:subsuc}) the adversary's success probability is defined with respect to a specific source distribution, whereas for a strong AMD code we require success probability at most $\epsilon$ when attacking any possible source distribution.  However, we note that this stronger notion of security has also been considered in the context of authentication codes  \cite{Stinson88,Stinson1990}.

\section{A design-theoretic perspective on splitting authentication codes}
\label{sec:splittingsetsystem}
The notion of a {\em splitting BIBD} was introduced in \cite{Ogata2004} for the purpose of classifying splitting authentication codes that were optimal with respect to certain bounds on their parameters.  In this section we introduce the related but weaker notion of a {\em splitting set system}.  This is essentially a way of describing a splitting authentication code using design-theoretic notation that will allow us to illuminate the fundamental connection between splitting authentication codes with perfect secrecy and AMD codes, as well as describe a wider class of authentication codes with useful properties, including perfect secrecy.
\begin{definition}
A {\em $(v,b,m)$-splitting set system}  consists of a finite set $\mathcal{ V}$ of {\em points} with $|\mathcal{ V}|=v$, together with a family $\mathcal{ B}$ of {\em blocks} where $|\mathcal{ B}|=b$ and each block $B\in \mathcal{ B}$ consists of a list of $m$ pairwise disjoint subsets $(B_1,B_2,\dotsc,B_m)$, with $B_j\subset \mathcal{ V}$ for $j=1,2,\dotsc,m$.  If $v=b$ then the splitting set system is said to be {\em symmetric}.
\end{definition}
\begin{example}\label{ex:sss}\cite{Huber12}
Let $\mathcal{V}=\{0,1,2,3,4,5,6,7,8\}$.  The rows of the following array give the blocks of $(9,9,2)$-splitting set system.  In the row corresponding to a block $B\in\mathcal{B}$  the vertical line separates the points of $B_1$ from the points of $B_2:$
\begin{equation*}
\begin{array}{c|c}
0,1&3,5\\
1,2&4,6\\
2,3&5,7\\
3,4&6,8\\
4,5&7,0\\
5,6&8,1\\
6,7&0,2\\
7,8&1,3\\
8,0&2,4
\end{array}
\end{equation*}
\end{example}

Consider the special case where each of the subsets $B_j$ has size $c$.   In this setting an $(v,b,m)$-splitting set system is known as a $(v,m\times c,\lambda)$-{\em splitting balanced incomplete block design}  (splitting BIBD) if it satisfies the following condition:
\begin{itemize}
\item for every pair $P,Q\in \mathcal{ V}$ with $P\neq Q$ there are precisely $\lambda$ blocks $B$ with $P\in B_j$ and $Q\in B_{j^\prime}$  for some $j,j^\prime$ with $j\neq j^\prime$.  
\end{itemize}
We note that the splitting set system of Example~\ref{ex:sss} is a $(9,2\times 2,1)$-splitting BIBD.

Splitting BIBDs were introduced in \cite{Ogata2004}, where they were shown to be equivalent to certain optimal splitting authentication codes.  More generally, every splitting authentication code with equiprobable encoding gives rise to a splitting set system (and {\em vice versa}) by making the following identifications: 

\begin{itemize}
\item the set of points $\mathcal{ V}$ is simply the set of messages $\mathcal{ T}$ of the authentication code;
\item for each key $k \in \mathcal{ K}$ we obtain a block $B$ by letting $B_j=e_{k}(s_j)$ for $j=1,2,\dotsc m$.
\end{itemize}

Note that it may be the case that two different keys give rise to the same encodings of the sources.  In this case the splitting set system would have repeated blocks, which nonetheless correspond to distinct keys.  In what follows, however, we restrict our attention to splitting set systems without repeated blocks.

This equivalent description gives a useful language for illustrating how the combinatorial properties of an authentication code with equiprobable encoding determine its security properties.

\begin{example}
For an authentication code with equiprobable sources and equiprobable encoding we can reformulate the expression for the success probability $\epsilon_\sigma$ of an adversary's strategy $\sigma$ given in (\ref{eq:subsuc}) in terms of the language of splitting set systems.  For $B\in \mathcal{ B}$, the set $X_{B,s_j}^\sigma$ is the set of points $P\in B_j$ for which $\sigma(P)\in B_{j^\prime}$ for some $j^\prime \neq j$, and $\epsilon_\sigma$ becomes
\begin{align}
\epsilon_\sigma &=\sum_{b\in\mathcal{B}}\sum_{j=1}^m \frac{|X_{B,s_j}^\sigma|}{bm|B_{j}|}.\label{eq:subprob}
\end{align}
In the $c$-regular case, (\ref{eq:subprob}) becomes
\begin{align*}
\epsilon_\sigma &=\frac{1}{bmc}\sum_{B\in\mathcal{B}}\sum_{j=1}^m |X_{B,s_j}^\sigma|.
\end{align*} 
Now suppose our splitting set system is a $(v,m\times c,\lambda)$-splitting BIBD.  For any $P\in \mathcal{ V}$, if $\sigma(P)=P$ then $P\notin X^\sigma_{B,s_j}$ for any $i,j$.  However, for each of the $v$ points $P\in \mathcal{ V}$, if $\sigma(P)\neq P$, then there are $\lambda$ blocks  $B$ with $P\in B_{j}$ and $\sigma(P)\in B_{j^\prime}$ for some $j,j^\prime$ with $j^\prime\neq j$.  Hence
\begin{align*}
\sum_{B\in\mathcal{B}}\sum_{j=1}^m |X_{B,s_j}^\sigma|\leq \lambda v,
\intertext{ with equality occurring for any strategy $\sigma$ with $\sigma(P)\neq P$ for any $P\in \mathcal{ V}$.  Hence for any such $\sigma$ we have}
\epsilon_\sigma=\frac{\lambda v}{bmc}.
\end{align*}
In \cite{Ogata2004} it was shown that, for a $(v,m\times c,\lambda)$-splitting BIBD, we have \begin{align*}\lambda=\frac{bmc(mc-c)}{v(v-1)}.\end{align*}  Thus we can express $\epsilon_\sigma$ as $(mc-c)/(v-1)$.  We observe that this is precisely the expression given by (\ref{eq:optimalsub}) for this authentication code, and hence we see it has optimal substitution probability.
\end{example}

\begin{example}
We now determine the impersonation probability.  The expression (\ref{eq:imp1}) can be interpreted as saying an adversary who attempts impersonation by sending point $P$ succeeds with probability equal to the number of blocks that contain $P$, divided by the total number of blocks.  For a $(v,m\times c,\lambda)$-splitting BIBD, each point is contained in $\lambda (v-1)/((m-1)c)$ points \cite{Ogata2004}, and so this probability becomes
\begin{align*}
\frac{\lambda (v-1)}{(m-1)cb}.
\end{align*}
Again expressing $\lambda$ in terms of the other parameters we can rearrange this expression to determine that the impersonation probability is $mc/v$.

On the other hand, the expression for the optimal impersonation probability given in (\ref{eq:imp2}) is equivalent to the size of the smallest block divided by the number of points.  For a $(v,m\times c,\lambda)$-splitting BIBD this is simply $mc/v$, hence we see that a splitting BIBD gives rise to an authentication code with optimum impersonation probability.  (This result was proved as part of Theorem~5.5 of \cite{Ogata2004} in the case where $\lambda=1$.)
\end{example}

\subsection{Automorphism groups of splitting set systems}
We have seen that the additional structure of a splitting BIBD makes it easier to analyse the properties of the corresponding authentication codes, and to derive further results in those codes having some desirable properties. In a similar vein, we now turn our attention to the question of how the presence of certain symmetries can impact the properties of authentication codes.

\begin{definition}
An {\em automorphism} of a splitting set system is a bijection $\theta\colon\mathcal{ V}\rightarrow \mathcal{ V}$ that preserves incidence in the sense that if $B=(B_{1},B_{2},\dotsc,B_{m})\in\mathcal{ B}$ then $B^\theta=(B_{1}^\theta,B_{2}^\theta,\dotsc,B_{m}^\theta)\in\mathcal{ B}$, where $B_{j}^\theta=\{P^\theta: P\in B_{j}\}$ for $j=1,2,\dotsc,m$.
\end{definition}

\begin{definition}
We say that a splitting set system $(\mathcal{ V},\mathcal{ B})$ is {\em group generated} if there is an abelian subgroup $\mathcal{ G}$ of its automorphism group that acts regularly on $\mathcal{ V}$.  That is, $(\mathcal{ V},\mathcal{ B})$ is group generated if and only if there is an abelian subgroup $\mathcal{ G}$ of its automorphism group with the property that for every pair of points $P,Q\in \mathcal{V}$ there is precisely one element $g\in \mathcal{ G}$ such that $P^g=Q$.
\end{definition}
Note that this definition extends readily to the case where $\mathcal{ G}$ is nonabelian, but for the purposes of this paper, we restrict our attention to abelian groups.

\begin{example}
The {\em cyclic splitting designs} defined by Huber in \cite{Huber12} are examples of group generated splitting set systems.  Specifically, they can be viewed as group generated splitting set systems in which the set system is a splitting BIBD, and $\mathcal{G}$ is a cyclic group.  Huber gives a construction of a group generated splitting set system with $m=2$, $v=2c^2n+1$ and $b=(2c^2n+1)n$ for any $c\geq 1$, $n\geq 1$ with $\mathcal{G}=\mathbb{Z}_v$.  The splitting set system in Example~\ref{ex:sss} arises from this construction with $c=2$, $n=1$ and $v=9$.
\end{example}

Consider the action of $\mathcal{ G}$ on $\mathcal{ B}$ that is induced by the action of $\mathcal{ G}$ on $\mathcal{ V}$.  We refer to the orbits of blocks under this action as {\em block orbits} of the splitting set system.  To simplify the presentation and analysis, in this paper we will assume that all block orbits have size $|\mathcal{ G}|$, i.e. that $\mathcal{ G}$ acts semiregularly on $\mathcal{ B}$.\footnote{If the orbit sizes are not uniform then we cannot guarantee perfect secrecy by taking a uniform distribution on the blocks.  A closer attention to the probabilities and a careful application of the Orbit-Stabliser Theorem is required to analyse this case.}

The following lemma sets out some useful combinatorial properties of the block orbits:

\begin{lemma}\label{lem:algproperties}

Let $\Omega$ be a block orbit of a group generated $(v,b,m)$-splitting set system with $|\Omega|=v$.  Then the blocks in $\Omega$ satisfies the following properties:
\begin{enumerate}
\item For any $j=1,2,\dotsc,m$ the set $B_{j}$ has the same size for all $B\in \Omega$.  Denote this size by $|B_{j}|=c^\Omega_j$.  
\item Every block $B\in \Omega$ contains the same number of points.  Denote this number by $\ell^\Omega=\sum_{j=1}^{m} c^\Omega_j$.
\item Every point $P\in \mathcal{ V}$ occurs in $c_j^\Omega$ of the sets $B_{j}$ with $B\in \Omega$.
\item Every point $P\in \mathcal{ V}$ occurs in $\ell^\Omega$ of the blocks in $\Omega$.

\end{enumerate}
\end{lemma}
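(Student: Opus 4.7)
The plan is to work through the four items in order, exploiting two facts repeatedly: first, any automorphism is a bijection that preserves the ordered-tuple structure of blocks; second, the subgroup $\mathcal{G}$ acts transitively (indeed regularly) on $\mathcal{V}$, and hence acts by permutation on $\Omega$ in a way that commutes with the $j$-th coordinate projection. These two ingredients reduce every claim to either an observation about bijections or a standard double-counting argument.

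For item 1, fix a base block $B \in \Omega$. Every other block in $\Omega$ has the form $B^g$ for some $g \in \mathcal{G}$, and the $j$-th part of $B^g$ is $B_j^g$. Since $g$ is a bijection of $\mathcal{V}$, $|B_j^g|=|B_j|$, so $|B_j|$ is constant over $\Omega$, and we may legitimately denote this common value by $c_j^\Omega$. Item 2 is then immediate: because the parts $B_1,\dotsc,B_m$ of a block are by definition pairwise disjoint subsets of $\mathcal{V}$, we have $|B|=\sum_{j=1}^m |B_j|=\sum_{j=1}^m c_j^\Omega$, which is the desired $\ell^\Omega$.

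Item 3 is the substantive step; I would prove it by double counting the set of pairs $\{(Q,B)\colon B\in\Omega,\ Q\in B_j\}$ for a fixed index $j$. Counting by blocks yields $|\Omega|\cdot c_j^\Omega = v\cdot c_j^\Omega$ using item 1 and the hypothesis $|\Omega|=v$. Counting by points, the total is $\sum_{Q\in\mathcal{V}}r_j(Q)$, where $r_j(Q)=|\{B\in\Omega\colon Q\in B_j\}|$. The key observation is that $r_j(Q)$ is independent of $Q$: given any two points $P,Q\in\mathcal{V}$, regularity of $\mathcal{G}$ provides a unique $h\in\mathcal{G}$ with $P^h=Q$, and the map $B\mapsto B^h$ is then a bijection from $\{B\in\Omega\colon P\in B_j\}$ onto $\{B\in\Omega\colon Q\in B_j\}$, because $P\in B_j$ iff $Q=P^h\in B_j^h=(B^h)_j$, and $B^h\in\Omega$ since $\Omega$ is $\mathcal{G}$-invariant. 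Writing the common value $r_j$, the two counts give $v r_j = v c_j^\Omega$, so $r_j=c_j^\Omega$.

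Finally, item 4 follows from item 3 by summing over $j$: the number of blocks in $\Omega$ containing $P$ equals $\sum_{j=1}^m r_j = \sum_{j=1}^m c_j^\Omega = \ell^\Omega$, where the first equality is valid because $B_1,\dotsc,B_m$ are pairwise disjoint, so each block containing $P$ contributes to exactly one summand. The main (mild) obstacle is item 3, where one must be careful to use regularity of $\mathcal{G}$ on $\mathcal{V}$ together with semiregularity on $\mathcal{B}$ (so that $|\Omega|=v$ and the bijection between block sets under translation is well defined); everything else is formal.
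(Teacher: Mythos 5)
Your proposal is correct and follows essentially the same route as the paper: item 1 via the regular action on the orbit, items 2 and 4 as immediate consequences, and item 3 by double counting pairs $(P,B)$ with $P\in B_j$, using regularity of $\mathcal{G}$ on $\mathcal{V}$ to show each point is covered equally often. The only cosmetic difference is that you establish the constancy of $r_j(Q)$ via the explicit bijection $B\mapsto B^h$ between the block sets at two points, whereas the paper argues the same fact by noting the multiset union $\bigcup_{B\in\Omega}B_j$ is $\mathcal{G}$-invariant and so cannot favour one point over another.
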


\begin{proof}
Let $\mathcal{ G}$ be the abelian subgroup of the automorphism group that acts regularly on $\mathcal{ V}$.  By definition, $\mathcal{ G}$ acts regularly on $\Omega$.  
\begin{enumerate}
\item \label{asdf} Let $B\in\Omega$, and set $c^\Omega_j=|B_{j}|$ for $j=1,2\dotsc,m$.  Since $\mathcal{ G}$ acts regularly on $\Omega$, it follows that for any $B^\prime\in \Omega$ we have $B^\prime=B^g$ for some $g\in\mathcal{ G}$ and hence $B^\prime_j=B_{j}^g$.  This implies that $|B^\prime_j|=c^\Omega_j$ for all $B^\prime\in \Omega$.
\item This follows immediately from \ref{asdf}.
\item \label{asdfasdf}The number of pairs $(P,B)$ where $B \in \Omega$ and $P\in B_{j}$ is $|\Omega|c_j^\Omega=vc_j^\Omega$.  The collection of sets $B_{j}$ with  $B \in \Omega$ is a union of orbits under the action of $\mathcal{ G}$, and hence is fixed when acted on by any element of $\mathcal{ G}$.  As $\mathcal{ G}$ acts regularly on $\mathcal{ V}$, it follows that the multiset of points contained in the multiset union $\bigcup_{B\in \Omega}B_{j}$ contains each element of $\mathcal{ V}$ an equal number of times.  For, if some element $P_1$ occurred more times than the element $P_2$, then acting on $\{B_{j} : B \in \Omega\}$ with the unique element $g\in \mathcal{ G}$ for which $P_1^g=P_2$ would not fix $\{B_{j} : B \in \Omega\}$.  Thus we conclude that each point occurs $vc_j^\Omega/v=c_j^\Omega$ times in this union, and furthermore these occurrences are all in distinct sets $B_{j}$. (By construction, no set $B_{j}$ contains repeated elements.)
\item This follows immediately from \ref{asdfasdf}.
\end{enumerate}
\end{proof}
Property \ref{asdfasdf} of Lemma~\ref{lem:algproperties}, considered together with Theorem~2.3 of \cite{stinpat19} (which applies only to $c$-splitting authentication codes) implies that a $c$-splitting authentication code arising from a group generated splitting set system has both perfect secrecy  and optimal impersonation probability.  By restricting our attention to group-generated splitting set systems we can prove results analogous to Lemma~2.2 and Theorem~2.3 of \cite{stinpat19} for authentication codes that are not necessarily $c$-splitting.
\begin{theorem}\label{thm:uniform}
The messages of an authentication code corresponding to a group generated splitting set system  are distributed uniformly, and this distribution is independent of the distribution of the sources.  
\end{theorem}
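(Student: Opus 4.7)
The plan is to compute $\Pr(t)$ directly from the law of total probability, show that every contribution depending on $t$ collapses thanks to Lemma~\ref{lem:algproperties}, and read off that the resulting expression is independent of both $t$ and the source distribution. Under the equivalence between authentication codes with equiprobable encoding and splitting set systems, the keys correspond to blocks $B \in \mathcal{B}$ (chosen uniformly), the sources are indexed as $s_1,\dotsc,s_m$, and the conditional encoding probability is $\Pr(t \mid B, s_j) = 1/|B_j|$ when $t \in B_j$ and $0$ otherwise.

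First I would write
\begin{align*}
\Pr(t) &= \sum_{B \in \mathcal{B}} \sum_{j=1}^{m} \Pr(B)\, \Pr(s_j)\, \Pr(t \mid B, s_j)\\
&= \frac{1}{|\mathcal{B}|} \sum_{j=1}^{m} \Pr(s_j) \sum_{\substack{B \in \mathcal{B} \\ t \in B_j}} \frac{1}{|B_j|}.
\end{align*}
Next I would decompose $\mathcal{B}$ into the block orbits $\Omega_1,\dotsc,\Omega_r$ under the action of $\mathcal{G}$, each of which has size $v$ by assumption. Fix an orbit $\Omega$ and a coordinate $j$. By part~1 of Lemma~\ref{lem:algproperties}, $|B_j| = c_j^\Omega$ for every $B \in \Omega$, and by part~3, the point $t$ lies in exactly $c_j^\Omega$ of the sets $\{B_j : B \in \Omega\}$. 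Hence the inner sum over $B \in \Omega$ with $t \in B_j$ contributes $c_j^\Omega \cdot (c_j^\Omega)^{-1} = 1$, independently of $t$.

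Summing over the $r$ orbits then gives
\begin{align*}
\Pr(t) = \frac{r}{|\mathcal{B}|} \sum_{j=1}^{m} \Pr(s_j) = \frac{r}{|\mathcal{B}|},
\end{align*}
since $\sum_{j} \Pr(s_j) = 1$. The right-hand side depends neither on $t$ nor on the source distribution, so the message distribution is uniform (in fact equal to $1/v$, since $|\mathcal{B}| = rv$) and is independent of the distribution of the sources, as required.

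I do not expect any serious obstacle: the proof is essentially just bookkeeping, and the substantive content has already been packaged into Lemma~\ref{lem:algproperties}. The only subtle point to handle carefully is the semiregularity assumption on the action of $\mathcal{G}$ on $\mathcal{B}$ (noted in the paragraph before the lemma), which is what guarantees that every orbit has size $v$ and hence that the cancellation $c_j^\Omega / c_j^\Omega = 1$ is performed the same number of times for each point $t$.
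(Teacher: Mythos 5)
Your proposal is correct and follows essentially the same route as the paper: both arguments decompose $\mathcal{B}$ into block orbits and use parts 1 and 3 of Lemma~\ref{lem:algproperties} to see that, for a fixed source index $j$, each orbit contributes exactly $1/b$ to the probability of a given message, giving $h/b = 1/v$ after summing over orbits. The only cosmetic difference is that the paper fixes a source $s_j$ and computes $\Pr(P \mid s_j) = v^{-1}$ directly, whereas you marginalise over the source distribution afterwards; the underlying counting is identical.
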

\begin{proof}
Suppose $(\mathcal{ V},\mathcal{ B})$ is a group generated $(v,b,m)$-splitting set system for which $\mathcal{ G}$ is an abelian subgroup of the automorphism group that acts regularly on $\mathcal{ V}$.  Suppose that there are $h$ block orbits, so we have $b=hv$. Fix a source $s_j$.  A point $P\in \mathcal{ V}$ occurs $c_j^\Omega$ times in sets $B_{j}$ with $B\in\Omega$.  Each of these instances arises with probability $(b{c_j^\Omega})^{-1}$, hence the total probability of obtaining the message $P$ when the source is $s_j$ and the key corresponds to a block in the orbit $\Omega$ is $b^{-1}$.  Summing over all $h$ orbits, we see that the total probability of obtaining the message $P$ when the source is $s_j$ is $hb^{-1}=v^{-1}$.   As $\Pr(P\mid s_j)=v^{-1}$ for all $P\in \mathcal{ V}$ and all $s_j\in \mathcal{ S}$ we conclude that the messages are uniformly distributed, independently of the source, as required.
\end{proof}
This result leads directly to the following corollary.
\begin{corollary}\label{cor:perfsec}
An authentication code corresponding to a group generated splitting set system has perfect secrecy. 
\end{corollary}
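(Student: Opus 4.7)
The plan is to derive perfect secrecy directly from Theorem~\ref{thm:uniform} using Bayes' rule. Theorem~\ref{thm:uniform} establishes that, for a group generated splitting set system, the conditional distribution on messages given any source is uniform on $\mathcal{V}$, so $\Pr(t \mid s) = 1/v$ for all $t \in \mathcal{V}$ and all $s \in \mathcal{S}$.

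First I would compute the marginal distribution on messages. By the law of total probability,
\begin{align*}
\Pr(t) = \sum_{s \in \mathcal{S}} \Pr(t \mid s)\Pr(s) = \frac{1}{v}\sum_{s \in \mathcal{S}}\Pr(s) = \frac{1}{v},
\end{align*}
so messages are themselves uniformly distributed regardless of the source distribution.

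Next I would apply Bayes' rule. For any $s \in \mathcal{S}$ and $t \in \mathcal{V}$ we have
\begin{align*}
\Pr(s \mid t) = \frac{\Pr(t \mid s)\Pr(s)}{\Pr(t)} = \frac{(1/v)\Pr(s)}{1/v} = \Pr(s),
\end{align*}
which is exactly the definition of perfect secrecy.

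Since Theorem~\ref{thm:uniform} already does the combinatorial heavy lifting (exploiting regularity of the group action to show each point is hit the right number of times in each coordinate of each block orbit), there is essentially no obstacle here: the corollary is a two-line probabilistic consequence of the uniformity of $\Pr(t \mid s)$ combined with the independence of the key from the source. The only thing to be careful about is noting that the argument works for an arbitrary source distribution $\Pr(s)$, not just the uniform one — but this is immediate because Theorem~\ref{thm:uniform} itself holds independently of the source distribution.
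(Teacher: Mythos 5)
Your proposal is correct and takes essentially the same route as the paper: the paper derives this corollary directly from Theorem~\ref{thm:uniform} (which gives $\Pr(t\mid s)=1/v$ for every message $t$ and source $s$), leaving the law-of-total-probability and Bayes'-rule steps implicit, and you have simply written those two lines out explicitly. Nothing further is needed.
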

\begin{corollary}\label{cor:optimp}
An authentication code corresponding to a group generated $(v,b,m)$-splitting set system has optimal impersonation probability if and only if each block has the same number of points. \label{cor:optimp}
\end{corollary}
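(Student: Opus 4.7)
The plan is to express both the impersonation probability and the lower bound (\ref{eq:imp2}) in the language of the splitting set system, then read off the condition for equality. Let $\Omega_1, \dots, \Omega_h$ be the block orbits under $\mathcal{G}$, so $b = hv$ by our standing semiregularity assumption. By Lemma~\ref{lem:algproperties}, each orbit $\Omega_i$ has a common block size $\ell^{\Omega_i}$, and every point of $\mathcal{V}$ lies in exactly $\ell^{\Omega_i}$ blocks of $\Omega_i$. Summing over $i$ shows that every point $P \in \mathcal{V}$ is contained in exactly $\sum_{i=1}^h \ell^{\Omega_i}$ blocks of $\mathcal{B}$; crucially this count is independent of $P$. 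Using (\ref{eq:imp1}) and identifying keys with blocks, the impersonation probability of the authentication code is therefore
\[
\frac{\sum_{i=1}^h \ell^{\Omega_i}}{b} = \frac{\sum_{i=1}^h \ell^{\Omega_i}}{hv}.
\]

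Next I would translate the lower bound (\ref{eq:imp2}). Since $|\bigcup_{s \in \mathcal{S}} e_k(s)|$ is simply the size of the block corresponding to $k$, and since the blocks in a fixed orbit all share a common size by Lemma~\ref{lem:algproperties}, this bound simplifies to $(\min_i \ell^{\Omega_i})/v$. The impersonation probability therefore meets the bound if and only if
\[
\frac{\sum_{i=1}^h \ell^{\Omega_i}}{hv} = \frac{\min_i \ell^{\Omega_i}}{v},
\]
i.e.\ the arithmetic mean of the $h$ numbers $\ell^{\Omega_i}$ equals their minimum. This happens exactly when the $\ell^{\Omega_i}$ are all equal, which (combined with the intra-orbit equality already supplied by Lemma~\ref{lem:algproperties}) is equivalent to every block of $\mathcal{B}$ containing the same number of points, giving both directions of the corollary.

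The only mildly delicate step is recognising that Lemma~\ref{lem:algproperties} makes the number of blocks through a point constant across $\mathcal{V}$, so the maximum over $t$ in the definition of impersonation probability collapses to a single numerical value; once that is in hand, the rest is a one-line averaging argument comparing a mean to its minimum.
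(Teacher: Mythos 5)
Your proposal is correct and follows essentially the same route as the paper: both use Lemma~\ref{lem:algproperties} to show that every point lies in exactly $\sum_{i=1}^h \ell^{\Omega_i}$ blocks (you invoke the point-in-orbit count directly, the paper sums the counts $c_j^{\Omega_i}$ over sources, which is the same thing), identify the bound (\ref{eq:imp2}) with $\min_i \ell^{\Omega_i}/v$, and conclude by the mean-versus-minimum comparison that equality holds exactly when all block sizes agree. No gaps; this matches the paper's argument.
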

\begin{proof}
By Lemma~\ref{lem:algproperties}, the blocks occurring in some block orbit $\Omega_i$ have size $\ell^{\Omega_i}$.  Hence the impersonation probability of the authentication code is greater than or equal to $\min_{i\in \{1,2,\dotsc,h\}}{\ell^{\Omega_i}}/{v}$.

Let $P\in \mathcal{ V}$.  By Lemma~\ref{lem:algproperties}, the number of keys that can give rise to $P$ as an encoding of source $j$ is given by
\begin{align*}
\sum_{i=1}^h c_j^{\Omega_i},
\end{align*}
and the total number of keys that can give rise to $P$ as an encoding of some source is thus
\begin{align*}
\sum_{j=1}^m \sum_{i=1}^h c_j^{\Omega_i}&=\sum_{i=1}^h\ell^{\Omega_i}.
\end{align*}
 We observe that, if $\ell^{\Omega_i}$ is some constant $\ell$ for each $i$, then this expression is simply $h\ell$, and so the impersonation probability is $h\ell/(hv)=\ell/v$, which is optimal.  However, if $\ell^{\Omega_i}$ varies with $i$, then 
\begin{align*}
\sum_{i=1}^h\ell^{\Omega_i}&> \sum_{i=1}^h\min_{i\in \{1,2,\dotsc,h\}}\ell^{\Omega_i},\\
&=h\min_{i\in \{1,2,\dotsc,h\}}\ell^{\Omega_i},
\end{align*}
and so the impersonation probability is strictly greater than $\min_{i\in \{1,2,\dotsc,h\}}\ell^{\Omega_i}/v$ and hence it is not optimal.
\end{proof}
Lemma~2.2 of \cite{stinpat19} showed that a $c$-splitting authentication code for $m$ sources, $v$ messages and $b$ keys has optimal impersonation probability if and only if each message $P$ is contained in $bcm/v$ blocks $B$, which can be seen as a result that is in some sense dual to Corollary~\ref{cor:optimp} in the setting of group-generated $c$-splitting authentication codes.  We note that in the case of a $c$-splitting group-generated splitting set system, each point is contained in $cm$ of the blocks in each orbit  by Lemma~\ref{lem:algproperties}.  Summing over all $h$ orbits, this implies that each point is contained in a total of $hcm=bcm/v$ blocks, and hence the fact that it has optimal impersonation probability follows from Lemma~2.2 of \cite{stinpat19}.

These results show that group-generated splitting set systems give rise to a class of authentication codes with interesting and useful properties.  They are also  a natural class to consider from a point of view of seeking good constructions of splitting authentication codes: the literature contains examples of group-generated splitting BIBDs such as those arising from external difference families \cite{Ogata2004}, and those in \cite{Wang,WS, Huber12}.  (We observe that as they are group-generated, the splitting authentication codes constructed in \cite{Wang} provide perfect secrecy even though this property is not considered in that paper.)

\subsection{AMD codes and group generated splitting set systems}
In this section we explore a close connection between group-generated splitting set systems and AMD codes that allows us to view an AMD code as a special case of an authentication code with perfect secrecy.  We consider the cases of weak and strong AMD codes separately; we will see that these correspond respectively to authentication codes that require a uniform source distribution, or to those permit any distribution on the sources.

Starting with a weak AMD code, we can obtain a splitting set system by constructing its {\em development}.  (This can be seen as a generalisation of Theorem~3.4 of \cite{Ogata2004}.)    
\begin{definition} Let $A(s_1),A(s_2),\dotsc,A(s_m)\subset \mathcal{ G}$ be the sets of valid encodings of the sources of a weak AMD code.  The {\em development} of this AMD code is the splitting set system obtained by setting $\mathcal{ V}=\mathcal{ G}$, and letting $\mathcal{ B}$ be the set of all blocks of the form $(g+A(s_1),g+A(s_2),\dotsc,g+A(s_m))$  for some $g\in \mathcal{ G}$.
\end{definition}
(Note that to simplify the presentation and analysis, we restrict our attention to the case where the development contains $|\mathcal{ G}|$ distinct blocks.)  This construction allows us to interpret a weak AMD code as a traditional authentication code:  

\begin{theorem}\label{thm:devisacode}
The development of a weak $(m,n,\epsilon)$-AMD code $(\mathcal{ S},\mathcal{ G},\mathcal{ A},\mathcal{ E})$ with equiprobable encoding is a group generated $(n,b,m)$ splitting set system $(\mathcal{ V},\mathcal{ B})$.  In the case where it has $n$ distinct blocks, it has $\mathcal{ G}$ as a subgroup of its automorphism group that acts regularly on $\mathcal{ V}$ and on $\mathcal{ B}$. The corresponding splitting authentication code has perfect secrecy and optimal impersonation probability and its substitution probability is at most $\epsilon$ when the sources are chosen uniformly.
\end{theorem}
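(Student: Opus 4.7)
The plan is to verify the four claims of the theorem in order, with most of the work in the substitution bound. First I would observe that the development is indeed a splitting set system: because $A(s_j)\cap A(s_{j'})=\emptyset$ for $j\neq j'$, translation preserves disjointness, so each tuple $(g+A(s_1),\dots,g+A(s_m))$ is a valid block. The group $\mathcal{G}$ acts on $\mathcal{V}=\mathcal{G}$ by translation $P\mapsto P+h$; this action is regular by construction, and it sends the block indexed by $g$ to the block indexed by $g+h$, so it acts by automorphisms and the development is group generated. When the development has $n$ distinct blocks, $|\mathcal{B}|=n=|\mathcal{G}|$ together with transitivity forces the $\mathcal{G}$-action on $\mathcal{B}$ to be regular as well. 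Perfect secrecy is then immediate from Corollary~\ref{cor:perfsec}, and optimal impersonation probability from Corollary~\ref{cor:optimp}, since every block has size $\sum_j|A(s_j)|=a$ and hence all blocks have the same number of points.

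For the substitution bound, given any substitution strategy $\sigma$, introduce $\delta\colon\mathcal{G}\to\mathcal{G}$ by $\delta(t)=\sigma(t)-t$. Identifying keys with group elements $g\in\mathcal{G}$ via the block labelling, a point $t=g+h$ with $h\in A(s)$ lies in $X^\sigma_{g,s}$ precisely when $h+\delta(t)\in A(s')$ for some $s'\neq s$. Swapping the sums over $g$ and $h\in A(s)$ and reparametrising by the bijection $g\mapsto t=g+h$ for each fixed $h$ gives
\begin{align*}
\sum_{g\in\mathcal{G}}|X^\sigma_{g,s}| \;=\; \sum_{t\in\mathcal{G}}|X^{\delta(t)}_s|.
\end{align*}
Substituting this into the uniform-source expression $\epsilon_\sigma=\frac{1}{mn}\sum_{s}|A(s)|^{-1}\sum_{g}|X^\sigma_{g,s}|$ collapses it to an average of AMD success probabilities,
\begin{align*}
\epsilon_\sigma \;=\; \frac{1}{n}\sum_{t\in\mathcal{G}}\epsilon_{\delta(t)}.
\end{align*}
For each $t$, either $\delta(t)\in\mathcal{G}^*$, in which case $\epsilon_{\delta(t)}\leq\epsilon$ by the weak-AMD hypothesis, or $\delta(t)=0$, in which case the disjointness of the $A(s)$ forces $X^0_s=\emptyset$ and so $\epsilon_0=0$. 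In either case the summand is at most $\epsilon$, yielding $\epsilon_\sigma\leq\epsilon$.

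The main obstacle is this last averaging step. A general substitution strategy $\sigma$ is far more flexible than a single-$\Delta$ AMD adversary: $\sigma$ may apply a different shift at every message. The key insight is that the regular $\mathcal{G}$-action on keys permits reindexing so that, summed over keys, the local shifts $\delta(t)$ appear uniformly, and one can bound each term using the weak-AMD guarantee. Everything else in the theorem is a bookkeeping consequence of translation invariance together with Corollaries~\ref{cor:perfsec} and \ref{cor:optimp}.
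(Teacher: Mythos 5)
Your proposal is correct and follows essentially the same route as the paper: the structural claims are handled by translation invariance together with Corollaries~\ref{cor:perfsec} and \ref{cor:optimp}, and the substitution bound rests on the same double-counting identity $\sum_{B\in\mathcal{B}}|X^\sigma_{B,s_j}| = \sum_{t\in\mathcal{G}}|X^{\delta(t)}_{s_j}|$ (the paper phrases it pointwise via the blocks $B^Q=g_Q+B^0$, you phrase it as a swap of sums and reparametrisation, but it is the same computation), after which $\epsilon_\sigma$ becomes an average of the AMD success probabilities $\epsilon_{\delta(t)}$. Your explicit treatment of the case $\delta(t)=0$, where $X^0_{s}=\emptyset$ forces the summand to vanish, is a small point the paper's proof passes over silently, but it does not change the argument.
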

\begin{proof}
By construction, the development of the $(m,n,\epsilon)$-AMD code $(\mathcal{ S},\mathcal{ G},\mathcal{ A},\mathcal{ E})$ is a splitting set system $(\mathcal{ V},\mathcal{ B})$ whose points are the elements of $\mathcal{ G}$; hence, the number of points is $n$.  Again, by construction we see that addition by an element of $\mathcal{ G}$ gives an automorphism of $(\mathcal{ V},\mathcal{ B})$.  Since $\mathcal{ G}$ acts regularly on itself by addition it follows that $\mathcal{ G}$ is a subgroup of the automorphism group of $(\mathcal{ V},\mathcal{ B})$ that acts regularly on $\mathcal{ V}$, hence the splitting set system is group generated.  The blocks of $\mathcal{ B}$ lie in a single orbit, so if $|\mathcal{ B}|=n$ then the action of $\mathcal{ G}$ on $\mathcal{ B}$ is regular.  Corollary~\ref{cor:perfsec} shows that the corresponding splitting authentication code has perfect secrecy, and Corollary~\ref{cor:optimp} shows that it has optimal impersonation probability.  

We now determine the substitution probability of the authentication code corresponding to  $(\mathcal{ V},\mathcal{ B})$.  Let $B^0$ denote the block $(A(s_1),A(s_2),\dotsc,A(s_m))$.  Consider a substitution strategy $\sigma$. We have
\begin{align*}
\epsilon_\sigma&=\sum_{j=1}^m\frac{1}{mb|A(s_j)|}\sum_{B\in\mathcal{B}}|X^\sigma_{B,s_j}|.
\end{align*}

The sum $\sum_{B\in\mathcal{B}}|X^\sigma_{B,s_j}|$ counts all pairs $(P,B)\in\mathcal{ V}\times\mathcal{ B}$ with $P\in B_{j}$ and $\sigma(P)\in B_{j^\prime}$ for some $j^\prime \neq j$. 
We compute this expression in a different way.  Let $P\in \mathcal{ V}$, and set $\Delta_P=\sigma(P)-P$.  For each point $Q\in B^0_{j}$ there is a unique element $g_Q\in \mathcal{ G}$ with $Q+g_Q=P$.  Let $B^Q$ denote the block $g_Q+B^0$.  Then $P\in B^Q_{j}$.  We claim that $P\in X_{B^Q,s_j}^\sigma$ if and only if $Q\in X_{s_j}^{\Delta_P}$: when $Q\in X_{s_j}^{\Delta_P}$ we have $Q+\Delta_P\in B^0_{j^\prime}$ for some $j^\prime \neq j$, in which case the element 
\begin{align*}
g_Q+(Q+\Delta_P)&=(g_Q+Q)+\sigma(P)-P,\\
&=P+\sigma(P)-P,\\
&=\sigma(P)
\end{align*} 
lies in $B^Q_{j^\prime}$ (Figure~\ref{fig:onlyfig}).  

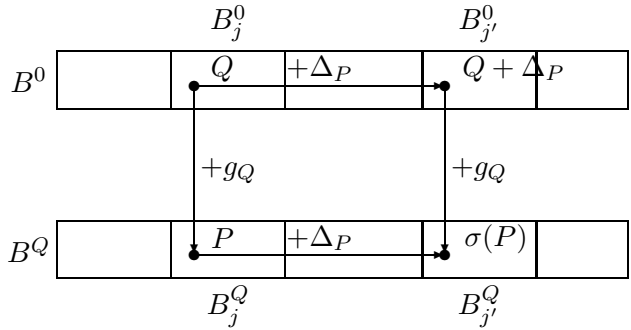
\begin{figure}[h!]
\begin{center}
\setlength{\unitlength}{1.5mm}
\begin{picture}(70,35)
\put(5,5){\framebox(50,5){}}
\put(5,20){\framebox(50,5){}}
\put(0,5){\makebox(5,5){$B^Q$}}
\put(0,20){\makebox(5,5){$B^0$}}
\put(15,5){\line(0,1){5}}
\put(25,5){\line(0,1){5}}
\put(37,5){\line(0,1){5}}
\put(47,5){\line(0,1){5}}
\put(15,20){\line(0,1){5}}
\put(25,20){\line(0,1){5}}
\put(37,20){\line(0,1){5}}
\put(47,20){\line(0,1){5}}
\put(17,7){\circle*{1}}
\put(39,7){\circle*{1}}
\put(39,22){\vector(0,-1){15}}
\put(17,22){\vector(0,-1){15}}
\put(17,7){\vector(1,0){22}}
\put(17,22){\vector(1,0){22}}
\put(17,22){\circle*{1}}
\put(39,22){\circle*{1}}
\put(17,6){\makebox(5,5){$P$}}
\put(17,21){\makebox(5,5){$Q$}}
\put(39,21){\makebox(12,5){$Q+\Delta_P$}}
\put(39,6){\makebox(9,5){$\sigma(P)$}}
\put(17,12){\makebox(6,5){$+g_Q$}}
\put(39,12){\makebox(6,5){$+g_Q$}}
\put(15,25){\makebox(10,5){$B^0_{j}$}}
\put(15,0){\makebox(10,5){$B^Q_{j}$}}
\put(37,25){\makebox(10,5){$B^0_{j^\prime}$}}
\put(37,0){\makebox(10,5){$B^Q_{j^\prime}$}}
\put(25,21){\makebox(6,5){$+\Delta_P$}}
\put(25,6){\makebox(6,5){$+\Delta_P$}}
\end{picture}
\end{center}
\caption{$P\in X_{B^Q,s_j}^\sigma$ when $Q\in X_{s_j}^{\Delta_P}$}
\label{fig:onlyfig}
\end{figure}

Thus the point $P$ lies in precisely $|X_{s_j}^{\Delta_P}|$ of the sets $X_{B,s_j}^\sigma$ and we conclude that 
\begin{align*}
\sum_{B\in\mathcal{B}}|X^\sigma_{B,s_j}|&=\sum_{P\in G}|X^{\Delta_P}_{s_j}|,
\intertext{which implies that the success probability of $\sigma$ is}
\sum_{j=1}^m\frac{1}{mn|A(s_j)|}\sum_{P\in \mathcal{ G}}|X^{\Delta_P}_{s_j}|&=\frac{1}{n}\sum_{P\in \mathcal{ G}}\left(\sum_{j=1}^m \frac{|X^{\Delta_P}_{s_j}| }{m|A(s_j)|}\right),\\
&\leq \frac{1}{n}\sum_{P\in \mathcal{ G}}\epsilon,\\
&=\epsilon,
\end{align*}
since an adversary who chooses $\Delta_P$ in attacking the AMD code has success probability at most $\epsilon$.

\end{proof}
The results of Theorem~\ref{thm:devisacode} show that this interpretation is both useful and natural, as they demonstrate that the greatest success probability of any adversary in attacking the AMD code corresponds directly to the greatest success probability of any substitution strategy for attacking the authentication code.  Immediate consequences of this interpretation include the fact that known bounds on the parameters of authentication codes apply also to the parameters of AMD codes.  Indeed bounds from the literature on the parameters of AMD codes can be seen to be special cases of existing bounds for authentication codes.  Further consequences of this connection will be discussed later in this section.

The following result can be seen as a weak converse of Theorem~\ref{thm:devisacode}:
\begin{theorem}\label{thm:acodetoamd} A group generated $(v,v,m)$ splitting set system $(\mathcal{ V},\mathcal{ B})$ with a single block orbit of size $v$, for which the corresponding authentication code has substitution probability at most $\epsilon$ when the sources are uniformly distributed, gives rise to a weak $(m,v,\epsilon)$-AMD code.
\end{theorem}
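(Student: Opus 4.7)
The plan is to invert the development construction of Theorem~\ref{thm:devisacode}. Let $\mathcal{G}$ be the abelian subgroup of the automorphism group of $(\mathcal{V},\mathcal{B})$ that acts regularly on $\mathcal{V}$; necessarily $|\mathcal{G}|=v$. Fix a base point $P_0\in\mathcal{V}$ and identify $\mathcal{V}$ with $\mathcal{G}$ via the bijection $g\mapsto P_0^g$, under which the action of $\mathcal{G}$ on $\mathcal{V}$ becomes addition. Choose any representative block $B^0=(B^0_1,\ldots,B^0_m)\in\mathcal{B}$ of the unique block orbit and define $A(s_j):=B^0_j\subseteq\mathcal{G}$ for $j=1,\ldots,m$. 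The sets $A(s_j)$ are pairwise disjoint because $B^0$ is a block of a splitting set system, so equipping this data with the equiprobable encoding rule $E$ yields a candidate AMD code $(\mathcal{S},\mathcal{G},\mathcal{A},E)$ on $m$ sources over a group of order $v$. Since the single block orbit has size $v=|\mathcal{G}|$, the blocks of $\mathcal{B}$ are exactly the translates $\{g+B^0:g\in\mathcal{G}\}$, so $(\mathcal{V},\mathcal{B})$ is itself the development of this AMD code.

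The next step is to establish $\epsilon_\Delta\leq\epsilon$ for every $\Delta\in\mathcal{G}^*$. Given such a $\Delta$, consider the substitution strategy $\sigma_\Delta$ defined by $\sigma_\Delta(P)=P+\Delta$ for every $P\in\mathcal{V}$. Specialising the point-counting argument from the proof of Theorem~\ref{thm:devisacode} to the case $\Delta_P=\Delta$ (independent of $P$) gives
\begin{align*}
\sum_{B\in\mathcal{B}}|X^{\sigma_\Delta}_{B,s_j}|=\sum_{P\in\mathcal{G}}|X^\Delta_{s_j}|=v\,|X^\Delta_{s_j}|.
\end{align*}
Substituting into equation~(\ref{eq:gensub}) with uniform source distribution $\Pr(s_j)=1/m$ and $|\mathcal{K}|=v$, and using the fact that all blocks in a single orbit have the same type sizes by Lemma~\ref{lem:algproperties}, yields
\begin{align*}
\epsilon_{\sigma_\Delta}=\sum_{j=1}^m\frac{1}{mv|A(s_j)|}\cdot v\,|X^\Delta_{s_j}|=\sum_{j=1}^m\frac{|X^\Delta_{s_j}|}{m|A(s_j)|}=\epsilon_\Delta,
\end{align*}
so the hypothesis $\epsilon_{\sigma_\Delta}\leq\epsilon$ gives $\epsilon_\Delta\leq\epsilon$ as required by the weak AMD definition.

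The main obstacle is conceptual rather than computational: one must fix a base point in $\mathcal{V}$ to promote the regular $\mathcal{G}$-action into an additive group structure on $\mathcal{V}$, and then recognise that the constant-shift substitution strategies $\sigma_\Delta$ form precisely the class of strategies encoding the adversarial choices available in the AMD model. Once this identification is made, the equivalence of success probabilities is just a repackaging of the double-counting identity derived inside the proof of Theorem~\ref{thm:devisacode}, and the conclusion follows by invoking the hypothesis on each $\sigma_\Delta$.
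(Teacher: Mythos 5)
Your proposal is correct and follows essentially the same route as the paper's own proof: identify $\mathcal{V}$ with $\mathcal{G}$ via a base point, take a representative block $B^0$ as the sets $A(s_j)$, note that the single orbit of size $v$ makes $(\mathcal{V},\mathcal{B})$ the development, and evaluate the constant-shift strategies $\sigma_\Delta$ to get $\epsilon_{\sigma_\Delta}=\epsilon_\Delta\leq\epsilon$. The only cosmetic difference is that the paper justifies $|X^{\sigma_\Delta}_{B,s_j}|=|X^{\Delta}_{s_j}|$ block by block (translation preserves differences) rather than by re-invoking the summed counting identity from Theorem~\ref{thm:devisacode}, but the computation is the same.
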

\begin{proof}
Let $\mathcal{ G}$ be the subgroup of the automorphism group of $(\mathcal{ V},\mathcal{ B})$ that acts regularly on $\mathcal{ V}$.    Fix a point $P$.  For every point $P^\prime \in \mathcal{ V}$, there is a unique element $g$  of $\mathcal{ G}$ for which $P^g=P^\prime$; we can thus identify these points with these group elements.  Pick a block $B^0\in \mathcal{ B}$; then the sets $B^0_{j}\subset\mathcal{ G}$, for $j=1,2,\dotsc, m$, are pairwise disjoint and can be regarded as the sets $A(s_1),A(s_2),\dotsc, A(s_m)$ of an AMD code $\mathcal{ A}$.  As there is a single block orbit of size $v$, $\mathcal{ G}$ acts regularly on $\mathcal{ B}$ and so we observe that $(\mathcal{ V},\mathcal{ B})$ is in fact the development of $\mathcal{ A}$.  Consider a substitution strategy $\sigma_\Delta$ defined by setting $\sigma(P)=P+\Delta$ for each $P\in \mathcal{ V}$.  We observe that for any $i\in 1,2,\dotsc, v$ we have $|X^{\sigma_\Delta}_{B,s_j}|=|X_{s_j}^\Delta|$ by construction, since addition of group elements preserves differences.  Thus we have
\begin{align*}
\epsilon_{\sigma_\Delta}&=\sum_{j=1}^m\frac{1}{mvc_j}\sum_{B\in\mathcal{B}}|X^{\sigma_\Delta}_{B,s_j}|,\\
&=\sum_{j=1}^m\frac{1}{mv|A(s_j)|}\sum_{B\in\mathcal{B}}|X^\Delta_{s_j}|,\\
&=\frac{1}{v}\sum_{B\in\mathcal{B}} \epsilon_\Delta,\\
&=\epsilon_\Delta.
\end{align*} 
Hence we see that $\mathcal{ A}$ is a weak $(m,v,\epsilon)$-AMD code, as $\epsilon_{\sigma_\Delta}\leq \epsilon$ for all $\Delta\in \mathcal{ G}$.
\end{proof}
This correspondence is not specific to the case of weak AMD codes: if we replace the weak AMD code by a strong AMD code, we obtain an authentication code that works for any source distribution.
\begin{theorem}
The development of a strong $(m,n,\epsilon)$-AMD code $(\mathcal{ S},\mathcal{ G},\mathcal{ A},\mathcal{ E})$ with equiprobable encoding is a group generated $(n,b,m)$ splitting set system $(\mathcal{ V},\mathcal{ B})$.  In the case where it has $n$ distinct blocks, it has $\mathcal{ G}$ as a subgroup of its automorphism group that acts regularly on $\mathcal{ V}$ and on $\mathcal{ B}$.  The corresponding splitting authentication code has perfect secrecy and it has substitution probability at most $\epsilon$ for any source distribution.
\end{theorem}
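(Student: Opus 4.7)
The plan is to leverage the proof of Theorem~\ref{thm:devisacode} almost wholesale, replacing only the final step where the uniform source distribution was used. All the structural claims, namely that the development is a group generated $(n,b,m)$-splitting set system with $\mathcal{G}$ acting regularly on $\mathcal{V}$ and (when the blocks are distinct) on $\mathcal{B}$, depend only on the construction of the development and not on the weak-versus-strong distinction. Perfect secrecy then follows immediately from Corollary~\ref{cor:perfsec}. So the only substantive work lies in bounding the substitution probability for an \emph{arbitrary} source distribution.

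For this I would repeat the counting argument used in Theorem~\ref{thm:devisacode}, but keep the source probabilities $\Pr(s_j)$ symbolic rather than collapsing them to $1/m$. Starting from (\ref{eq:subsuc}), after identifying $B^0 = (A(s_1),\dotsc,A(s_m))$ and writing $\Delta_P = \sigma(P)-P$, the same bijection between pairs $(Q,B^Q)$ with $Q\in X_{s_j}^{\Delta_P}$ and occurrences of $P$ in some $X_{B,s_j}^\sigma$ yields
\begin{align*}
\sum_{B\in\mathcal{B}}|X^\sigma_{B,s_j}| = \sum_{P\in \mathcal{G}}|X^{\Delta_P}_{s_j}|.
\end{align*}
Plugging this into (\ref{eq:subsuc}) with the general source distribution gives
\begin{align*}
\epsilon_\sigma = \sum_{j=1}^m \frac{\Pr(s_j)}{n\,|A(s_j)|}\sum_{P\in\mathcal{G}}|X^{\Delta_P}_{s_j}| = \frac{1}{n}\sum_{P\in\mathcal{G}}\left(\sum_{j=1}^m \frac{\Pr(s_j)\,|X^{\Delta_P}_{s_j}|}{|A(s_j)|}\right) = \frac{1}{n}\sum_{P\in\mathcal{G}} \epsilon_{\Delta_P},
\end{align*}
where $\epsilon_{\Delta_P}$ is the success probability of an adversary who adds $\Delta_P$ in the AMD game under the given source distribution.

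The key observation, and the reason the strong hypothesis is precisely what we need, is that by Theorem~2.4 (the characterisation of strong AMD codes as those secure under arbitrary source distributions) we have $\epsilon_{\Delta_P}\le \epsilon$ for every $\Delta_P\in\mathcal{G}$ and every source distribution. Averaging this bound over the $n$ points of $\mathcal{G}$ gives $\epsilon_\sigma\le \epsilon$, as required. No step here is really an obstacle once the previous theorem is in hand; the only subtlety is recognising that the same block-orbit counting identity holds for arbitrary $\Pr(s_j)$, so that the strong AMD guarantee can be applied pointwise to each $\Delta_P$ and then averaged.
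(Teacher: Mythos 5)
Your proposal is correct and follows essentially the same route as the paper: reuse the structural part of Theorem~\ref{thm:devisacode} verbatim, keep the counting identity $\sum_{B\in\mathcal{B}}|X^\sigma_{B,s_j}|=\sum_{P\in \mathcal{G}}|X^{\Delta_P}_{s_j}|$, and carry the general $\Pr(s_j)$ through the computation. The only (cosmetic) difference is that you bound the inner sum by invoking the earlier equivalence theorem for strong AMD codes to get $\epsilon_{\Delta_P}\leq\epsilon$, whereas the paper bounds $\epsilon_{s_j,\Delta_P}\leq\epsilon$ directly from the definition of a strong AMD code and then averages over the $\Pr(s_j)$; the two are the same estimate applied in a different order of summation.
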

\begin{proof}
The proof follows that of Theorem~\ref{thm:devisacode} exactly, except for the determination of the substitution probability.  In this case we have
\begin{align*}
\epsilon_\sigma&=\sum_{B\in\mathcal{B}}\sum_{j=1}^m\frac{\Pr(s_j)|X_{B,s_j}^\sigma|}{n|A(s_j)|} .
\intertext{As before, we have $\sum_{B\in\mathcal{B}}|X_{B,s_j}^\sigma|=\sum_{P\in \mathcal{ G}}|X_{s_j}^{\Delta_P}|,$ and so}
\epsilon_\sigma&=\sum_{j=1}^m\frac{\Pr(s_j)}{n|A(s_j)|}  \sum_{P\in \mathcal{ G}}|X_{s_j}^{\Delta_P}|,\\
&=\sum_{j=1}^m\frac{\Pr(s_j)}{n}\sum_{P\in \mathcal{ G}}\epsilon_{s_j,\Delta_P},\\
&\leq\sum_{j=1}^m\frac{\Pr(s_j)}{n}\sum_{P\in \mathcal{ G}}\epsilon,\\
&=\epsilon.
\end{align*}
\end{proof}
\begin{theorem}
A group generated $(v,v,m)$ splitting set system $(\mathcal{ V},\mathcal{ B})$ with a single block orbit of size $v$, for which the corresponding authentication code has substitution probability at most $\epsilon$ for any source distribution, gives rise to a strong $(m,v,\epsilon)$-AMD code.
\end{theorem}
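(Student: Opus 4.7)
The plan is to follow the structure of Theorem~\ref{thm:acodetoamd}, exploiting the stronger hypothesis on the substitution probability via a judicious choice of source distribution. First I would use the regular action of $\mathcal{G}$ on $\mathcal{V}$ to identify points with group elements, fix a block $B^0 \in \mathcal{B}$, and define an AMD code $\mathcal{A}$ by setting $A(s_j) = B^0_j$ for $j = 1, 2, \dotsc, m$. Since there is a single block orbit of size $v$, $\mathcal{G}$ acts regularly on $\mathcal{B}$ and so $(\mathcal{V}, \mathcal{B})$ is precisely the development of $\mathcal{A}$.

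Next, for each $\Delta \in \mathcal{G}^*$ I would analyse the substitution strategy $\sigma_\Delta$ defined by $\sigma_\Delta(P) = P + \Delta$. Exactly as in Theorem~\ref{thm:acodetoamd}, the fact that translation by an element of $\mathcal{G}$ preserves differences gives $|X_{B,s_j}^{\sigma_\Delta}| = |X_{s_j}^\Delta|$ for every $B \in \mathcal{B}$. The expression for $\epsilon_\sigma$ under an arbitrary source distribution (the analogue of (\ref{eq:subprob}) with $\Pr(s_j)$ in place of $1/m$) then yields
\begin{align*}
\epsilon_{\sigma_\Delta} &= \sum_{B \in \mathcal{B}}\sum_{j=1}^m \frac{\Pr(s_j)\,|X_{B,s_j}^{\sigma_\Delta}|}{v\,|A(s_j)|} \\
&= \sum_{j=1}^m \Pr(s_j)\,\frac{|X_{s_j}^\Delta|}{|A(s_j)|} \\
&= \sum_{j=1}^m \Pr(s_j)\,\epsilon_{s_j,\Delta},
\end{align*}
where the second equality uses the fact that the sum over the $v$ blocks produces the factor $v$ that cancels the one in the denominator.

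The crucial step is to specialise to the degenerate distribution that places all mass on a single source $s_j$. For that choice the above sum collapses to $\epsilon_{\sigma_\Delta} = \epsilon_{s_j,\Delta}$, and by hypothesis this quantity is at most $\epsilon$. Since $s_j \in \mathcal{S}$ and $\Delta \in \mathcal{G}^*$ were arbitrary, we conclude that $\epsilon_{s_j,\Delta} \leq \epsilon$ for all such $s_j$ and $\Delta$, which is exactly what it means for $\mathcal{A}$ to be a strong $(m,v,\epsilon)$-AMD code.

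I do not anticipate a serious obstacle: the argument parallels Theorem~\ref{thm:acodetoamd} almost verbatim, with the only new ingredient being the use of degenerate source distributions to isolate each source individually. The assumption of a single block orbit of size $v$ is essential both for identifying $(\mathcal{V}, \mathcal{B})$ with the development of $\mathcal{A}$ and for ensuring $|X_{B,s_j}^{\sigma_\Delta}|$ is constant across all blocks $B \in \mathcal{B}$.
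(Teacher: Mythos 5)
Your proposal is correct and follows essentially the same route as the paper: construct the AMD code from a fixed block $B^0$ as in Theorem~\ref{thm:acodetoamd}, use the translation strategies $\sigma_\Delta$ with $|X^{\sigma_\Delta}_{B,s_j}|=|X^{\Delta}_{s_j}|$, and specialise to the degenerate source distribution concentrated on a single $s_j$ to obtain $\epsilon_{s_j,\Delta}=\epsilon_{\sigma_\Delta}\leq\epsilon$. The only cosmetic difference is that you first write the success probability for a general distribution before specialising, whereas the paper works with the point-mass distribution from the outset.
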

\begin{proof}
Consider the source distribution in which source $s_j$ is chosen with probability 1, and define the substitution strategy $\sigma_\Delta$ as in the proof of Theorem~\ref{thm:acodetoamd}.  In this case we have
\begin{align*}
\epsilon_{\sigma_\Delta}&=\frac{1}{vc_j}\sum_{B\in\mathcal{B}}|X^{\sigma_{\Delta}}_{B,s_j}|,\\
&=\frac{1}{v|A_(s_j)|}\sum_{B\in\mathcal{B}}|X^{\Delta}_{s_j}|,\\
&=\epsilon_{s_j,\Delta}.
\end{align*}
Thus, for any choice of source $s_j$, we have $\epsilon_{s_j,\Delta}=\epsilon_{\sigma_\Delta}\leq \epsilon$ and so the AMD code is a strong $(m,v,\epsilon)$-AMD code, as required.
\end{proof}
In \cite{oldamd}, the notion of an {\em R-optimal} weak (resp.\ strong) AMD code was defined.  These are  weak (resp.\ strong) AMD codes for which the success probability of the worst-case adversarial choice of $\Delta$ is equal to that of the average-case choice.  A $c$-regular weak or strong $(m,n,\epsilon)$-AMD code is R-optimal if $\epsilon=c(m-1)/(n-1)$ (see \cite{oldamd}).  
\begin{corollary}
The development of a $c$-regular R-optimal weak $(m,n,\epsilon)$-AMD code has optimal substitution probability when the sources are uniformly distributed.  The development of a $c$-regular R-optimal strong $(m,n,\epsilon)$-AMD code has optimal substitution probability for any source distribution.
\end{corollary}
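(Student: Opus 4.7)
The plan is to show that the upper bound on the substitution probability that Theorem~\ref{thm:devisacode} (and its strong analogue) already delivers is matched by the lower bound (\ref{eq:optimalsub}), forcing equality and hence optimality.

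For the weak case, I would first analyse the combinatorial structure of the development of a $c$-regular AMD code: since every $A(s_j)$ has size $c$ by $c$-regularity, and every block of $\mathcal{B}$ is a translate of $B^0 = (A(s_1), A(s_2), \ldots, A(s_m))$, in every block each of the $m$ parts has size exactly $c$. Hence for every key $k$ of the associated authentication code we have $|\bigcup_{s \in \mathcal{S}} e_k(s)| = mc$ and $\max_{s \in \mathcal{S}} |e_k(s)| = c$, and the number of messages is $n$, so the lower bound (\ref{eq:optimalsub}) evaluates to $(mc - c)/(n - 1) = c(m-1)/(n-1)$. By R-optimality this number is precisely $\epsilon$. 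Theorem~\ref{thm:devisacode} shows that the substitution probability of the authentication code under the uniform source distribution is at most $\epsilon$, so the achieved value coincides with the bound in (\ref{eq:optimalsub}) and the authentication code has optimal substitution probability.

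The strong case follows by the identical combinatorial calculation applied to the strong analogue of Theorem~\ref{thm:devisacode}, which guarantees substitution probability at most $\epsilon$ for any source distribution. Since the evaluation of the bound (\ref{eq:optimalsub}) depends only on the block structure and the parameter $n$, neither of which changes, the bound is again $c(m-1)/(n-1) = \epsilon$, and R-optimality yields equality for every source distribution. No real obstacle is anticipated; the whole argument is a bookkeeping calculation verifying that the substitution bound from AMD-security and the optimal substitution bound (\ref{eq:optimalsub}) coincide exactly when the AMD code is R-optimal.
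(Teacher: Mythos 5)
Your proposal is correct and follows essentially the same route as the paper: both compare the upper bound $\epsilon = c(m-1)/(n-1)$ supplied by Theorem~\ref{thm:devisacode} (and its strong analogue) with the lower bound (\ref{eq:optimalsub}), which for the development of a $c$-regular code evaluates to the same quantity $(mc-c)/(n-1)$, forcing equality. The only difference is that you spell out the block-structure calculation of (\ref{eq:optimalsub}) explicitly, which the paper states without detail.
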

\begin{proof}
Let $\mathcal{ A}$ be an R-optimal weak $(m,n,\epsilon)$-AMD code.  By Theorem~\ref{thm:devisacode} we know that its development is an authentication code with substitution probability $c(m-1)/(n-1)$ when the sources are chosen uniformly. For this authentication code, the expression in (\ref{eq:optimalsub}) is also equal to $c(m-1)/(n-1)$, and hence we conclude that this substitution probability is optimal.
\end{proof}
In fact the bound in (\ref{eq:optimalsub}) is not tight in general for authentication codes that are not $c$-splitting, as shown by the following example:
\begin{example}
Consider the weak $(4,10,1/2)$-AMD code $\mathcal{ A}$ in $\mathbb{Z}_{10}$ that is defined by the sets $A(1)=\{0\},A(2)=\{5\},A(3)=\{1,9\},A(4)=\{2,3\}$.  This was shown in \cite{oldamd} to be R-optimal.  However, we note that the expression of (\ref{eq:optimalsub}) for the corresponding authentication code is 
\begin{align*}
\frac{6-2}{10-1}=\frac{4}{9}<\frac{1}{2}.
\end{align*}
\end{example}

Inspired by the R-bound \cite{oldamd} for AMD codes, we now establish a bound on the substitution probability for splitting authentication codes that coincides with (\ref{eq:optimalsub}) in the $c$-splitting case, but which is tighter for authentication codes that are not $c$-splitting.  Note that, although we make use of a cyclic group in the proof, we are not assuming that the authentication code is group generated, since the group elements are, in general, not automorphisms of the authentication code.

\begin{theorem}\label{thm:newbound}
Let $(\mathcal{ V},\mathcal{ B})$ be a $(v,b,m)$-splitting set system arising from an authentication code with substitution probability $\epsilon$. Then
\begin{align*}
\epsilon\geq \sum_{B\in\mathcal{B}} \frac{1}{b}\left(\frac{|B|-\sum_{j=1}^m\Pr(s_j)|B_{j}|}{v-1}\right).
\end{align*}
\end{theorem}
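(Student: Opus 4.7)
The plan is to mimic the averaging argument underlying the R-bound for AMD codes, but purely at the level of substitution strategies rather than automorphisms. Fix an arbitrary bijection $\mathcal{V}\leftrightarrow\mathbb{Z}_v$ and, for each $\Delta\in\mathbb{Z}_v\setminus\{0\}$, let $\sigma_\Delta$ be the substitution strategy defined by $\sigma_\Delta(P)=P+\Delta$. As the paragraph preceding the theorem emphasises, the shifts $P\mapsto P+\Delta$ are \emph{not} assumed to act as automorphisms of $(\mathcal{V},\mathcal{B})$; the cyclic group is only being used to index a family of candidate strategies.

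First, using the set-system reformulation in (\ref{eq:subprob}) applied with arbitrary source distribution, write
\[
\epsilon_{\sigma_\Delta}=\sum_{B\in\mathcal{B}}\sum_{j=1}^m \frac{\Pr(s_j)\,|X^{\sigma_\Delta}_{B,s_j}|}{b\,|B_j|},
\]
where $X^{\sigma_\Delta}_{B,s_j}=\{P\in B_j: P+\Delta\in B_{j'}\text{ for some }j'\neq j\}$. Since $\epsilon\geq \epsilon_{\sigma_\Delta}$ for every $\Delta$, averaging over the $v-1$ nonzero shifts gives
\[
\epsilon\;\geq\;\frac{1}{v-1}\sum_{\Delta\in\mathbb{Z}_v\setminus\{0\}}\epsilon_{\sigma_\Delta}.
\]

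The central step is a double-counting interchange. For fixed $B\in\mathcal{B}$ and $j\in\{1,\dots,m\}$, the multiset of triples $(\Delta,P,Q)$ with $\Delta\neq 0$, $P\in B_j$, $Q=P+\Delta$ and $Q\in B_{j'}$ for some $j'\neq j$ is in bijection with the ordered pairs $(P,Q)\in B_j\times\bigcup_{j'\neq j}B_{j'}$ via $\Delta=Q-P$; pairwise disjointness of the subsets in a block forces $\Delta\neq 0$ automatically. Hence
\[
\sum_{\Delta\neq 0}|X^{\sigma_\Delta}_{B,s_j}|=|B_j|\bigl(|B|-|B_j|\bigr).
\]
Plugging this into the averaged expression, the $|B_j|$ in the numerator cancels with the one in the denominator, and using $\sum_{j=1}^m\Pr(s_j)=1$ collapses the inner sum over $j$ to $|B|-\sum_{j=1}^m\Pr(s_j)|B_j|$. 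Rearranging the factors of $b$ and $v-1$ yields exactly the stated bound.

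There is no substantive obstacle here: the argument is an averaging-plus-double-counting calculation, and the cancellation with $|B_j|$ is what makes the bound clean in the non-$c$-splitting case. The one subtle point, which the paper already flags, is conceptual rather than technical: one must notice that the averaging argument only requires the shifts $\sigma_\Delta$ to be well-defined substitution strategies on $\mathcal{V}$, not incidence-preserving maps of $(\mathcal{V},\mathcal{B})$, so no group-generated hypothesis on the splitting set system is needed.
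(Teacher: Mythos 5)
Your proposal is correct and follows essentially the same route as the paper: identify $\mathcal{V}$ with $\mathbb{Z}_v$, take the $v-1$ nonzero cyclic shifts as substitution strategies, and show by double counting that $\sum_{\Delta\neq 0}|X^{\sigma_\Delta}_{B,s_j}|=|B_j|(|B|-|B_j|)$, so the average of the $\epsilon_{\sigma_\Delta}$ equals the claimed quantity and hence $\epsilon$ is at least it. The only cosmetic difference is that you phrase the conclusion as ``$\epsilon$ dominates each $\epsilon_{\sigma_\Delta}$, hence their average,'' while the paper says ``some $\sigma_r$ achieves at least the average''; these are the same argument.
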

\begin{proof}
We prove this result by showing that there exists a substitution strategy whose success probability is at least this value.

We identify the points of $\mathcal{ V}$ with the elements of $\mathbb{Z}_{v}=\{0,1,2\dotsc,v-1\}$.  For $r\in\{1,2,\dotsc,v-1\}$ we define a substitution strategy $\sigma_r$ by setting $\sigma_r(P)=P+r \pmod{v}$ for all $P\in \mathcal{ V}$.  We now compute the mean $\overline{\epsilon_{\sigma_r}}=\sum_{r=1}^{v-1}\frac{1}{v-1} \epsilon_{\sigma_r}$ as follows:
\begin{align*}
\overline{\epsilon_{\sigma_r}}&=\sum_{r=1}^{v-1}\frac{1}{v-1} \epsilon_{\sigma_r},\\
&=\frac{1}{v-1}\sum_{r=1}^{v-1}\sum_{B\in\mathcal{B}}\sum_{j=1}^m\frac{|X^{\sigma_r}_{B,s_j}|\Pr(s_j)}{b|B_{j}|},\\
&=\frac{1}{(v-1)b}\sum_{j=1}^m\Pr(s_j) \sum_{b\in\mathcal{B}}\frac{1}{|B_{j}|} \sum_{r=1}^{v-1}|X^{\sigma_r}_{B,s_j}|.\\
\intertext{Consider the set $X^{\sigma_r}_{B,s_j}$.  There are $|B_{j}|$ elements in $B_{j}$ and $|B|-|B_{j}|$ elements in $\bigcup_{j^\prime \neq j}
B_{j^\prime}$.  For each pair of elements $P\in B_{j}$ and $Q\in B_{j^\prime}$ with $j^\prime \neq j$ there is a unique value of $r$ in $\{1,2,\dotsc,r-1\}$ with $\sigma_{r}(P)=Q$.  In this case we thus have $P\in X^{\sigma_r}_{B,s_j}$.  Hence we see that $\sum_{r=1}^{v-1}|X^{\sigma_r}_{B,s_j}|$ is equal to the number of such pairs, which is $|B_{j}|(|B|-|B_{j}|)$. Thus we have}    
\overline{\epsilon_{\sigma_r}}&=\frac{1}{(v-1)b}\sum_{j=1}^m\Pr(s_j) \sum_{b\in\mathcal{B}}\frac{1}{|B_{j}|} |B_{j}|(|B|-|B_{j}|),\\
&=\frac{1}{(v-1)b}\sum_{j=1}^m\Pr(s_j) \sum_{b\in\mathcal{B}}(|B|-|B_{j}|),\\
&=\frac{1}{(v-1)b}\sum_{b\in\mathcal{B}}\sum_{j=1}^m\Pr(s_j)(|B|-|B_{j}|),\\
&=\frac{1}{(v-1)b}\sum_{b\in\mathcal{B}} \left(|B|-\sum_{j=1}^m\Pr(s_j)|B_{j}|\right).
\end{align*}
Since this quantity is the mean of the success probabilities $\epsilon_{\sigma_r}$, we conclude that there is at least one value of $r\in\{1,2,\dotsc, v-1\}$ for which $\epsilon_{\sigma_r}$ is greater than or equal to this quantity.
\end{proof}
We note that the quantity $\sum_{j=1}^m\Pr(s_j)|B_{j}|$ is the average, over all sources $s_j$, of the size of the set $B_{j}$ of possible encodings of $s_j$ when the key is $B$.  The corresponding bound in \cite{Blundo99} has instead the maximum over all sources $s_j$ of the size of $B_{j}$.  For authentication codes that are not $c$-splitting, this new bound is thus tighter.  This new bound now corresponds directly to the R-bound for an AMD code, in both the weak and strong cases.  R-optimal AMD codes can be viewed as those where the success probability of the worst case choice of $\delta$ (i.e.\ the most successful $\delta$) is equal to that of the average case (so that in fact the success probability of each choice of $\delta$ is the same.)  A similar interpretation holds for this new bound, making it a rather natural one:

\begin{theorem}
Let $(\mathcal{ V},\mathcal{ B})$ be a $(v,b,m)$-splitting set system arising from an authentication code whose substitution probability $\epsilon$ attains the bound of Theorem~\ref{thm:newbound}.  Then any substitution strategy $\sigma$ for which $\sigma(P)\neq P$ for all $P\in \mathcal{ V}$ has $\epsilon_\sigma=\epsilon$.
\end{theorem}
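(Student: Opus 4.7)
The plan is to adapt the averaging argument used to prove Theorem~\ref{thm:newbound} so that it ranges over \emph{all} fixed-point-free strategies rather than just the $v-1$ cyclic shifts. Specifically, instead of averaging $\epsilon_{\sigma_r}$ uniformly over $\sigma_1,\dotsc,\sigma_{v-1}$, I will put the uniform distribution on the set of all functions $\sigma\colon\mathcal{V}\to\mathcal{V}$ with $\sigma(P)\neq P$ for every $P$; equivalently, choose $\sigma(P)$ independently and uniformly from $\mathcal{V}\setminus\{P\}$ for each $P\in\mathcal{V}$.

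The key observation is that this new distribution has the same one-coordinate marginals as the cyclic-shift distribution: for every pair $P,Q\in\mathcal{V}$ with $P\neq Q$, $\Pr[\sigma(P)=Q]=1/(v-1)$ in either case. Since $|X^\sigma_{B,s_j}|$ is a sum of indicators, each depending on $\sigma$ only through the single value $\sigma(P)$ for one $P\in B_j$, linearity of expectation gives $\mathbb{E}[|X^\sigma_{B,s_j}|]=|B_j|(|B|-|B_j|)/(v-1)$, exactly as in the cyclic-shift calculation. Substituting into $\epsilon_\sigma=\sum_{B}\sum_{j}\Pr(s_j)|X^\sigma_{B,s_j}|/(b|B_j|)$ and simplifying as in the proof of Theorem~\ref{thm:newbound} yields
$$\mathbb{E}[\epsilon_\sigma]=\sum_{B\in\mathcal{B}}\frac{1}{b}\left(\frac{|B|-\sum_{j=1}^m\Pr(s_j)|B_j|}{v-1}\right),$$
which is precisely the right-hand side of the bound in Theorem~\ref{thm:newbound}.

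Now assume the code attains this bound, so $\mathbb{E}[\epsilon_\sigma]=\epsilon$. Every strategy satisfies $\epsilon_\sigma\leq\epsilon$ by the definition of substitution probability, so $\epsilon-\epsilon_\sigma$ is a nonnegative random variable with expectation zero, hence $\epsilon_\sigma=\epsilon$ for every $\sigma$ in the support of the distribution. Since that support is the entire set of fixed-point-free functions $\mathcal{V}\to\mathcal{V}$, the claim follows. I foresee no real obstacle: the computation parallels the one already done for cyclic shifts, the only new ingredient being that the relevant quantities depend on $\sigma$ solely through the pair marginals $\Pr[\sigma(P)=Q]$, which coincide under both distributions. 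The main bookkeeping point is that $|B_j|$ need not be constant across blocks or parts, but this is already handled by the block-dependent notation in use.
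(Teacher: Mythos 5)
Your proposal is correct and is essentially the paper's own argument: both compute the average of $\epsilon_\sigma$ over all fixed-point-free strategies (you via expectation with the marginals $\Pr[\sigma(P)=Q]=1/(v-1)$, the paper via counting that each pair $(P,Q)$ is hit by $(v-1)^{v-1}$ of the $(v-1)^v$ strategies), observe this average equals the bound of Theorem~\ref{thm:newbound}, and conclude from $\epsilon_\sigma\leq\epsilon$ together with the average equalling $\epsilon$ that every such strategy attains $\epsilon$. Your phrasing of the final step as a nonnegative random variable with zero expectation is just a cleaner rendering of the same conclusion.
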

\begin{proof}
The value $\overline{\epsilon_{\sigma_r}}$ was an average value taken over the $v-1$ substitution strategies of the form $\sigma_r$.  We need to show that no other substitution strategies can be more successful.  The key thing to note about the set of substitution strategies $\{\sigma_1,\sigma_2,\dotsc,\sigma_{v-1}\}$ is that for each pair $P,Q\in \mathcal{ V}$ with $P\neq Q$ there is precisely one strategy $\sigma_r$ in that set with $\sigma_r(P)=Q$.  Suppose instead that we wish to calculate the average success probability over the set $\Gamma$ of all strategies $\sigma$ for which $\sigma(P)\neq P$ for all $P\in \mathcal{ V}$.  For any pair $P,Q\in \mathcal{ V}$ with $P\neq Q$, there are $(v-1)^{v-1}$ elements $\sigma\in \Gamma$ with $\sigma(P)=Q$.  Note that $|\Gamma|=(v-1)^v$.  So when we repeat the calculation we did before, the average will turn out the same, since the sum will be $(v-1)^{v-1}$ times larger, but we are dividing by $(v-1)^v$ instead of by $(v-1)$.  The bound is only tight if the worst-case probability is equal to the average case, which implies that all the strategies in $\Gamma$ are equiprobable.
\end{proof}
Observe in addition that the bound will only be tight if $|B|-\sum_{j=1}^m\Pr(s_j)|B_{j}|$ is constant, independent of $i$.  For uniform sources this becomes $(m|B|-1)/m$, so we require the size of the blocks to be the same.

\section{Constructions for Authentication Codes with Splitting and Perfect Secrecy}
We observe that the literature contains examples of group-generated authentication codes that are not AMD-codes, for example splitting BIBDs that are the development of more than one base block \cite{Wang,WS}.  Group-generated splitting BIBDs have perfect secrecy by Corollary~\ref{cor:perfsec}.  In this section we consider a combinatorial property that allows us to determine when a splitting BIBD has perfect security.
\subsection{Equitably Ordered Splitting BIBDs}
\label{sec:equitableordering}
We recall from Section~\ref{sec:splittingsetsystem} that a \emph{$(v, m \times c,1)$-splitting BIBD} is a set system consisting of a set $\mathcal{ V}$ of  $v$ points and a set $\B$ of blocks of size $mc$, which satisfies the following properties:
\begin{enumerate}
\item each block $B$ can be partitioned into $m$ subsets of size $c$, which are denoted $B_j$, $1 \leq j \leq m$, and 
\item given any two distinct points $x$ and $y$, there is a unique block $B$ such that $x \in B_j$ and $y \in B_j^\prime$, where $j \neq j^\prime$.
\end{enumerate}

A $(v, m \times c,1)$-splitting BIBD has replication number $r$ and $b$ blocks, where
\begin{eqnarray*}
 r & = & \frac{v-1}{(m-1)c} \quad \text{and}\\
 b & = & \frac{v(v-1)}{m(m-1)c^2}.
\end{eqnarray*}
Of course $r$ and $b$ must be integers if a  $(v, m \times c,1)$-splitting BIBD exists.

Splitting BIBDs were defined in \cite{Ogata2004} as a method of constructing authentication codes with splitting.
They have been studied in a number of research papers since then.  Here, our interest is in constructing  authentication codes with splitting that also provide perfect secrecy. This can be accomplished if the splitting BIBD satisfies an additional property.

A $(v, m \times c,1)$-splitting BIBD is  \emph{equitably ordered} if, for every point $x$ and for all integers $j$ such that $1 \leq j  \leq m$, the number of blocks $B$ such that $x \in B_j$ is independent of $j$.
\begin{theorem}
 If a splitting BIBD is equitably ordered, then it yields an 
 authentication code with perfect secrecy.
\end{theorem}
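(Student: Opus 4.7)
The plan is to reduce perfect secrecy to the statement that the conditional distribution of the message given the source does not depend on the source. Writing $\Pr(t) = \sum_j \Pr(s_j)\Pr(t\mid s_j)$, if $\Pr(t\mid s_j)$ is independent of $j$, then $\Pr(t\mid s_j) = \Pr(t)$ for every $t$ and $s_j$, and Bayes' rule converts this into $\Pr(s_j\mid t) = \Pr(s_j)$, which is perfect secrecy. So the task is to prove independence of $j$.

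The core computation is to unwind $\Pr(t\mid s_j)$ using the splitting set system description of the authentication code. The keys correspond bijectively to blocks in $\mathcal{B}$ and are drawn uniformly, so $\Pr(B) = 1/b$. Under the key $B$, the source $s_j$ is encoded uniformly over $B_j$, so $\Pr(t\mid B,s_j) = 1/c$ if $t \in B_j$ and $0$ otherwise. Summing over blocks gives
\begin{align*}
\Pr(t\mid s_j) \;=\; \sum_{B\in\mathcal{B}} \Pr(B)\,\Pr(t\mid B,s_j) \;=\; \frac{1}{bc}\,\bigl|\{B\in\mathcal{B} : t\in B_j\}\bigr|.
\end{align*}
Denote the count on the right by $r_j(t)$. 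The equitable ordering hypothesis asserts precisely that $r_j(t)$ is independent of $j$ (for every fixed $t$), so $\Pr(t\mid s_j)$ does not depend on $j$ and perfect secrecy follows.

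As a sanity check I would observe that in a $(v,m\times c,1)$-splitting BIBD each point lies in exactly $r = (v-1)/((m-1)c)$ blocks, so $\sum_{j=1}^{m} r_j(t) = r$, and equitable ordering therefore forces $r_j(t) = r/m$. Hence $\Pr(t\mid s_j) = r/(bmc)$, which is independent of both $j$ and $t$; the induced message distribution is in fact uniform, mirroring the conclusion of Theorem~\ref{thm:uniform} for group generated codes.

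There is no real obstacle here: the argument is a direct calculation from the definitions of equiprobable encoding, uniform keys, and the splitting set system formalism already developed. The only thing one must verify is that the two independent uniform choices (of key, then of encoding within $B_j$) combine correctly, which is immediate.
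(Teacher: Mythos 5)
Your proof is correct, and it is essentially the argument the paper relies on: the paper states this theorem without an explicit proof, and your computation of $\Pr(t\mid s_j)=r_j(t)/(bc)$ together with the equitable-ordering hypothesis is the direct analogue of the paper's proof of Theorem~\ref{thm:uniform} for group-generated systems (where the constant count per position comes from Lemma~\ref{lem:algproperties} instead of the ordering hypothesis). Your sanity check that $r_j(t)=r/m$, so the message distribution is uniform and independent of the source, matches the paper's conclusion in that setting.
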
 
The obvious  necessary condition for a splitting BIBD to be equitably ordered is that 
\[ r \equiv 0 \bmod m.\]
Now, assuming an equitable ordering, we have $v = r (m-1)c + 1$ and $r = tm$ for some integer $t$, so
\[ v = tm (m-1)c + 1.\]
Then
\begin{eqnarray*}
 b &=& \frac{(tm (m-1)c + 1)(tm (m-1)c)}{m(m-1)c^2}\\
 &=& \frac{(tm (m-1)c + 1)t}{c}\\
 &=& t^2m(m-1) + \frac{t}{c},
 \end{eqnarray*}
 so $t =cs$, $r = csm$ and
 \[v = sm (m-1)c^2 + 1,\]
 for some integer $s$.
 That is,  a splitting BIBD can be equitably ordered only if
\begin{equation}
\label{v.eq} v \equiv 1 \bmod (m (m-1)c^2).\end{equation}

It is easy to  obtain $(v, m \times c,1)$-splitting BIBDs that can be equitably ordered if they are generated by base blocks over an abelian group. 

\begin{lemma}
\label{group.lem}
Suppose that a $(v, m \times c,1)$-splitting BIBD is generated by base blocks over an abelian group of order $v$, and suppose every orbit of blocks has size $v$. Then the splitting BIBD can be equitably ordered.
\end{lemma}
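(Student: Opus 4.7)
The plan is to show directly, via the regular action of the abelian group on the point set, that for any point $x$ and any index $j \in \{1, \ldots, m\}$ the number of blocks whose $j$-th part contains $x$ depends on neither $x$ nor $j$. Since this count will turn out to be the same constant, equitable ordering follows immediately.

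First I would identify $\mathcal{V}$ with the underlying abelian group $G$ so that $G$ acts on $\mathcal{V}$ by translation. A block orbit then has the form $\Omega = \{g + B^0 : g \in G\}$ for some base block $B^0 = (B^0_1, \ldots, B^0_m)$, where $g + B^0 := (g + B^0_1, \ldots, g + B^0_m)$. The hypothesis that every orbit has size $v = |G|$ means the action of $G$ on each orbit is in fact regular: the $v$ translates $g + B^0$ are pairwise distinct blocks.

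Next I would fix $x \in \mathcal{V}$, an index $j$, and a single orbit $\Omega$ with base block $B^0$. Then $x \in (g + B^0)_j = g + B^0_j$ if and only if $g \in x - B^0_j$, which gives exactly $|B^0_j| = c$ values of $g$. (This is essentially part 3 of Lemma~\ref{lem:algproperties} specialised to the $c$-splitting setting.) Hence $x$ lies in the $j$-th part of exactly $c$ blocks of $\Omega$, and this count is independent of $x$ and of $j$. Summing over all $h = b/v$ block orbits yields a total of $hc$ blocks whose $j$-th part contains $x$, again independent of $j$, so the splitting BIBD is equitably ordered.

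There isn't really a hard step here; the proof is essentially a bookkeeping exercise on the semiregular action. The one subtlety worth flagging is that the hypothesis that every orbit has full size $v$ is genuinely used: short orbits would arise from nontrivial stabilisers of base blocks, and within such an orbit the translation argument for counting occurrences of $x$ in $B_j$ would overcount or distribute unevenly across the indices $j$, potentially destroying the equitable ordering.
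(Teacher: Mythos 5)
Your proof is correct and follows essentially the same route as the paper: order each base block arbitrarily, develop it, and count that each point falls in the $j$-th part of exactly $c$ blocks per full-size orbit, giving $hc = r/m$ occurrences per position in total. Your explicit translation count ($x \in g + B^0_j$ iff $g \in x - B^0_j$) and your remark about where the full-orbit hypothesis is used just make explicit what the paper leaves implicit.
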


\begin{proof}
Under the stated hypotheses, the splitting BIBD is generated from \[\frac{v-1}{m(m-1)c^2}\] base blocks. Each base block gives rise to $v$ blocks in the design. We can arbitrarily order each base block. Then the development of each base block yields exactly $c$ copies of each point in each of the $m$ sets. Therefore, we get \[ \frac{v-1}{m(m-1)c} = \frac{r}{m}\] copies of each point in each of the $m$ sets.
\end{proof}

\begin{example}
\label{25,3,2.ex} A $(25, 3 \times 2,1)$-splitting BIBD is presented in \cite{GMW}. It has points in $\zed_{25}$ and it is generated from the base block 
\[ \{ \{0,1\}, \{2,4\}, \{12,20\} \}.\]
If we order the base block as \[ ( \{0,1\}, \{2,4\}, \{12,20\} )\] and maintain this ordering as the block is developed, we obtain the blocks
\[
\begin{array}{c}
( \{0,1\}, \{2,4\}, \{12,20\} ) \\
( \{1,2\}, \{3,5\}, \{13,21\} ) \\
\vdots \\
( \{24,0\}, \{1,3\}, \{11,19\} ).
\end{array}
\]
Then each point occurs twice in the union of the first sets, second sets and third sets.
\end{example}

Using the technique of Lemma \ref{group.lem}, we can construct equitably ordered $(v, 2 \times c,1)$-splitting BIBDs.

\begin{theorem}
For any $c \geq 2$, an equitably ordered $(v, 2 \times c,1)$-splitting BIBD exists if and only if 
$v \equiv 1 \bmod (2c^2)$.
\end{theorem}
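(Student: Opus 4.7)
The plan is to split the result into necessity and sufficiency, both of which are already set up by earlier material in the paper.

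For the necessity direction, I would simply specialise equation (\ref{v.eq}), which was derived in full generality just before the theorem statement. It reads $v \equiv 1 \pmod{m(m-1)c^2}$; plugging in $m = 2$ immediately yields $v \equiv 1 \pmod{2c^2}$. So this direction is purely arithmetic and requires no further combinatorial input.

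For the sufficiency direction, assume $v \equiv 1 \pmod{2c^2}$ and write $v = 2c^2 n + 1$ for some positive integer $n$ (the case $n = 0$ being degenerate). I would then invoke Huber's construction, mentioned in the example preceding Lemma \ref{lem:algproperties} (from \cite{Huber12}), which produces a group-generated $(v, 2 \times c, 1)$-splitting BIBD whose automorphism group contains $\mathbb{Z}_v$ acting regularly on the points, with $b = vn$ blocks lying in exactly $n$ orbits of size $v$. Once this splitting BIBD is in hand, Lemma \ref{group.lem} directly gives an equitable ordering: choose any ordering on each of the $n$ base blocks, and develop cyclically. This completes the sufficiency argument.

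The main obstacle, and essentially the only nontrivial step, is producing the splitting BIBD itself for every admissible $v$. In effect, one needs a family of $n$ base pairs $(A_i, B_i)$ in $\mathbb{Z}_v$ with $|A_i| = |B_i| = c$ such that the multiset of cross-differences $\{a - b, b - a : a \in A_i, b \in B_i, 1 \leq i \leq n\}$ partitions $\mathbb{Z}_v \setminus \{0\}$. This is a splitting external difference family of the appropriate size, and constructing one for arbitrary $n$ requires genuine combinatorial work; here the burden is discharged by citing Huber's construction, which handles all $c \geq 1$ and $n \geq 1$ uniformly. If one wished to avoid that citation, the bulk of the proof would be replaced by such an explicit construction, which would likely proceed via cyclotomic classes in $\mathbb{Z}_v$ when $v$ is prime and via recursive/product constructions for composite $v$. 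Given the citation, however, the proof is essentially a one-line assembly of the existence result with Lemma \ref{group.lem}.
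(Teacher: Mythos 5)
Your proposal is correct and follows essentially the same route as the paper: necessity by specialising (\ref{v.eq}) to $m=2$, and sufficiency by taking a cyclically generated $(2c^2n+1,\, 2\times c,\, 1)$-splitting BIBD built from $n$ base blocks over $\mathbb{Z}_v$ with full-length orbits and then applying Lemma~\ref{group.lem}. The only difference is the citation supplying the construction --- the paper invokes Ge--Miao--Wang \cite{GMW} rather than Huber \cite{Huber12} --- which is immaterial, since both provide the same base-block construction for every $c \geq 2$ and $n \geq 1$.
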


\begin{proof}
Necessity follows from (\ref{v.eq}). For sufficiency, we use a construction from \cite{GMW}. There, it is shown that
a $(2c^2t+1, 2 \times c,1)$-splitting BIBD can be constructed from $t$ base blocks defined over
$\zed_{2c^2t+1}$. Applying Lemma \ref{group.lem}, we have the desired result.
\end{proof}

 We  use the following general recursive approach to construct various families of $(v, m \times c,1)$-splitting BIBDs that are equitably ordered. This construction will make use of group divisible designs.  We note that the term ``group'' here is a historical usage that does not refer to an algebraic group.  To avoid confusion, we will refer to the groups of a group divisible design as ``design groups'' to clarify that we are talking about particular sets of points in the design, rather than an algebraic group. A {\em group-divisible design} consists of a set of points $\mathcal{ V}$, a set $G$ of {\em groups} that forms a partition of $\mathcal{ V}$, and a set of blocks $\mathcal{ B}$ such that no block contains more than one point from the same design group, and every pair of points from different design groups is in a unique block.  A group divisible design is an $m$-GDD if every block has size $m$.  The {\em type} of a GDD is the multiset of its design group sizes.  The type of a GDD is usually described using an exponential notation.

 Suppose that there is an $m$-GDD on $sm(m-1)c$ points, such that
 \[ |G|  \equiv 0 \bmod (m(m-1)c) \] for every design group $G$.
 The replication number $r_x$ of any point $x \in G$ is
\begin{eqnarray*} r_x &=& \frac{sm(m-1)c - |G|}{m-1} \\
&=& \frac{sm(m-1)c - s'm(m-1)c}{m-1} \quad \text{for some intger $s'$}\\
& = &  mc(s-s')\\
& \equiv & 0\bmod m.
\end{eqnarray*}
 
 We show that an $m$-GDD satisfying the above properties can be equitably ordered, by using a technique from \cite{Stinson1990}. We first construct the point vs block bipartite incidence graph for the $m$-GDD. Each ``block'' vertex has degree $m$ and each ``point'' vertex $x$ has degree $r_x \equiv 0 \bmod m$. Split each ``point'' vertex $x$ into $r_x/m$ vertices of degree $m$. Now we have an $m$-regular bipartite graph, which therefore can be $m$-edge-coloured. Say the colours are $1,2, \dots, m$. For each block, this specifies an ordering of the points in such a way that every point occurs equally often in each position. Therefore the blocks of the GDD have been equitably ordered.
 
Next, we take $c$ copies of every point in the GDD and replace every (ordered) block by the trivial $(mc, m \times c,1)$-splitting GDD of type $c^m$. That is, each ordered block $(x_1, x_2,\dots , x_m)$ is replaced by 
\[(\{x_1\} \times \{1, \dots , c \}, \{x_2\}\times \{1, \dots , c \},\dots , \{x_m\}\times \{1, \dots , c \}).\]
This yields an $(sm(m-1)c^2, m \times c,1)$-splitting GDD that is equitably ordered.
 
Suppose further that there is a $(c|G|+1, m \times c,1)$-splitting BIBD that is equitably ordered, for every design group $G$ in the $m$-GDD. Note that \[c|G|+1 \equiv 1 \bmod (m(m-1)c^2),\] so the necessary numerical condition
(\ref{v.eq})  is satisfied. Then we obtain a $(v, m \times c,1)$-splitting BIBD  by simply taking  the blocks in the $(sm(m-1)c^2, m \times c,1)$-splitting GDD along with all the blocks in the various $(c|G|+1, m \times c,1)$-splitting BIBDs. Since each of these designs is equitably ordered, the resulting $(v, m \times c,1)$-splitting BIBD is equitably ordered.
 
 Summarizing the discussion above, we have the following.
 
\begin{theorem}
\label{general.thm}
Suppose that $v = sm (m-1)c^2 + 1$ and 
 suppose there is an $m$-GDD on $(v-1)/c$ points, 
 such that  the following conditions hold for every design group $G$:
 \begin{enumerate}
 \item $|G|  \equiv 0 \bmod (m(m-1)c)$ and
 \item there is a $(c|G|+1, m \times c,1)$-splitting BIBD that is equitably ordered.
 \end{enumerate}
 Then there is a $(v, m \times c,1)$-splitting BIBD that is equitably ordered.
\end{theorem}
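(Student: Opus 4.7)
The plan is to execute the three-stage construction already sketched informally in the paragraphs leading up to the theorem, and then to verify carefully that equitable ordering is preserved at each stage. First, I would take the hypothesized $m$-GDD on $(v-1)/c$ points and equitably order each of its blocks. Second, I would inflate each point by a factor of $c$ and replace each ordered block by the trivial $(mc, m \times c, 1)$-splitting GDD of type $c^m$, obtaining an equitably ordered $(sm(m-1)c^2, m \times c, 1)$-splitting GDD whose design groups are the inflations of the original groups by $c$. Third, for each inflated design group of size $c|G|$ I would adjoin a single common extra point and overlay the hypothesized equitably ordered $(c|G|+1, m \times c, 1)$-splitting BIBD on those $c|G|+1$ points; condition $c|G|+1 \equiv 1 \bmod (m(m-1)c^2)$ (a consequence of hypothesis (1)) guarantees that this fill is feasible. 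The union of the splitting GDD blocks with the blocks of all the filling BIBDs is the claimed $(v, m \times c, 1)$-splitting BIBD on $v = sm(m-1)c^2 + 1$ points.

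For Stage one I would form the point-block bipartite incidence graph of the $m$-GDD. Every block vertex has degree $m$, and for any point $x$ lying in a design group $G$ with $|G| = s'm(m-1)c$ the replication number satisfies $r_x = mc(s - s')$ by the calculation preceding the theorem; in particular $m \mid r_x$ by hypothesis (1). I would then split each point vertex $x$ into $r_x/m$ vertices of degree $m$, producing an $m$-regular bipartite graph, which admits a proper $m$-edge-colouring by K\"{o}nig's edge-colouring theorem. Interpreting the colour on an edge $\{B, x\}$ as the positional index $j$ in which $x$ sits inside $B$ yields an ordering of every block such that each $x$ occupies each of the $m$ positions exactly $r_x/m$ times. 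This is equitability at the GDD level, and it passes through Stage two unchanged because each element of $\{x\} \times \{1, \dots, c\}$ inherits the positional label of $x$.

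For Stage three I would check the pair-covering property and then the equitable ordering of the union. Two distinct points of the final design either lie in different inflated design groups, in which case they are covered exactly once by the blocks produced in Stage two, or they lie inside the same augmented group (possibly involving the extra point), in which case they are covered exactly once by the corresponding filling BIBD. Equitability of the combined design then follows by linearity: for each point $P$ and each position $j$ the number of blocks placing $P$ in position $j$ equals the sum of the contribution from the inflated GDD (position-independent by Stage one) and the contribution from the unique filling BIBD containing $P$ (position-independent by hypothesis (2)). The principal obstacle is the edge-colouring step; it is what forces condition (1), since without $m \mid r_x$ the $m$-regularisation of the bipartite graph is impossible and the whole scheme collapses.
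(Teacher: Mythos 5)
Your proposal is correct and follows essentially the same route as the paper: equitably order the $m$-GDD via the point-block bipartite incidence graph, vertex splitting, and an $m$-edge-colouring of the resulting $m$-regular bipartite graph; inflate each point by $c$ and replace each ordered block by the trivial splitting GDD of type $c^m$; then fill the inflated design groups, together with one common extra point, using the hypothesized equitably ordered $(c|G|+1, m\times c,1)$-splitting BIBDs, taking the union of all blocks. The only small imprecision is your phrase ``the unique filling BIBD containing $P$'': the common extra point lies in \emph{every} filling BIBD, but since each of those is equitably ordered its position counts still sum to a position-independent total, so the argument goes through exactly as in the paper.
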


We now construct several families of equitably ordered $(v, m \times c,1)$-splitting BIBDs, for fixed $m$ and $c$, using Theorem \ref{general.thm}.

\begin{theorem}
There exists a $(v, 3 \times 2,1)$-splitting BIBD that is equitably ordered if and only if
$v \equiv 1 \bmod 24$.
\end{theorem}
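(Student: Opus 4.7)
The necessity of $v \equiv 1 \bmod 24$ is immediate from equation \eqref{v.eq} with $m = 3$ and $c = 2$, since $m(m-1)c^2 = 24$. Thus it suffices to construct an equitably ordered $(24s + 1, 3 \times 2, 1)$-splitting BIBD for every integer $s \geq 1$.

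The base case $s = 1$ is already handled by Example \ref{25,3,2.ex}, which exhibits a $(25, 3 \times 2, 1)$-splitting BIBD generated by a single base block over $\zed_{25}$; by Lemma \ref{group.lem} this design can be equitably ordered.

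For $s \geq 3$, the plan is to apply Theorem \ref{general.thm} with an ingredient 3-GDD of type $12^s$ on $12s = (v-1)/c$ points. Such a 3-GDD exists for all $s \geq 3$ by the classical existence theorem for 3-GDDs of uniform type, since the numerical conditions $g(u-1) \equiv 0 \bmod 2$ and $u(u-1)g^2 \equiv 0 \bmod 6$ are trivially satisfied for $g = 12$. Each design group $G$ has size $12 \equiv 0 \bmod m(m-1)c$, and the ingredient design required for each group is a $(c|G|+1, 3 \times 2, 1) = (25, 3 \times 2, 1)$-splitting BIBD, which is equitably ordered by the base case. Theorem \ref{general.thm} then produces the desired design.

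The main obstacle is the remaining case $s = 2$, i.e.\ $v = 49$, since no 3-GDD on $24$ points with all design group sizes divisible by $12$ can admit any blocks: the only candidate types, $12^2$ and $24^1$, have fewer than three groups. For this case I would give a direct group-generated construction by exhibiting two base blocks over an abelian group of order $49$, such as $\zed_{49}$ or $\zed_7 \times \zed_7$. Each such base block consists of three unordered pairs and, when developed, contributes $24$ nonzero external differences between distinct parts, so two carefully chosen base blocks can cover each of the $48$ nonzero group elements exactly once. A short computer search, or an explicit algebraic construction exploiting the multiplicative structure of $\zed_7 \times \zed_7$, should suffice to produce such base blocks; equitable ordering then follows automatically from Lemma \ref{group.lem}.
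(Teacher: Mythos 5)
Your overall route is the same as the paper's: necessity from \eqref{v.eq}, the case $s=1$ via Example~\ref{25,3,2.ex} and Lemma~\ref{group.lem}, and all $s \geq 3$ via a $3$-GDD of type $12^s$ fed into Theorem~\ref{general.thm} with the equitably ordered $(25, 3\times 2,1)$-splitting BIBD as the ingredient. You also correctly diagnose why $s=2$ escapes the recursion (a $3$-GDD on $24$ points with all design group sizes divisible by $12$ would have at most two design groups and hence no blocks), and your counting of what is needed for a direct construction is right: a $(49,3\times 2,1)$-splitting BIBD with full orbits must come from two base blocks, each contributing $24$ external differences between distinct parts, together covering the $48$ nonzero elements exactly once.

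The gap is that for $v=49$ you never actually establish existence: ``a short computer search \dots should suffice'' is a promissory note, not a proof, and the whole ``if'' direction of the theorem hinges on this one value. The paper closes this case by invoking the $(49, 3\times 2,1)$-splitting BIBD already constructed by Ge, Miao and Wang \cite{GMW}, which is generated from base blocks over an abelian group and is therefore equitably ordered by Lemma~\ref{group.lem} --- exactly the strategy you propose, but with the construction in hand. To make your argument complete you should either cite that construction or exhibit explicit base blocks (for instance two base blocks over $\zed_{49}$ or $\zed_7 \times \zed_7$ whose external differences exhaust the nonzero elements exactly once) and verify the covering property; everything else in your write-up then goes through as in the paper.
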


\begin{proof}
The necessary condition $v \equiv 1 \bmod 24$ follows from (\ref{v.eq}). We prove sufficiency using the same approach as \cite{GMW}.
Let $v = 24s+1$. For the case $s=1$, an equitably ordered $(25, 3 \times 2,1)$-splitting BIBD (from \cite{GMW}) was presented in Example \ref{25,3,2.ex}.
For $s=2$, the $(49, 3 \times 2,1)$-splitting BIBD presented in \cite{GMW} can be equitably ordered by Lemma \ref{group.lem}. For $s\geq 3$, we proceed as follows.
A $3$-GDD of type $12^s$ exists for all $s \geq 3$. As we have already mentioned, there is an equitably ordered $(25, 3 \times 2,1)$-splitting BIBD. Therefore, from Theorem \ref{general.thm}, 
we obtain a $(24s+1, 3 \times 2,1)$-splitting BIBD that is equitably ordered.
\end{proof}

\begin{theorem}
There exists a $(v, 4 \times 2,1)$-splitting BIBD that is equitably ordered if and only if
$v \equiv 1 \bmod 48$, with the possible exception of  $v = 49$.
\end{theorem}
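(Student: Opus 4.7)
The plan is to mimic the structure of the preceding $(v, 3 \times 2, 1)$ theorem. Necessity is immediate: with $m=4$ and $c=2$ we have $m(m-1)c^2 = 48$, so (\ref{v.eq}) forces $v \equiv 1 \bmod 48$. For sufficiency, write $v = 48s + 1$, so $s=1$ corresponds to the stated possible exception $v=49$, and we must handle every $s \geq 2$.

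The framework I would use is Theorem \ref{general.thm} with $m=4$, $c=2$. It requires a $4$-GDD on $(v-1)/2 = 24s$ points whose design-group sizes are all divisible by $m(m-1)c = 12$, together with an equitably ordered $(2|G|+1, 4 \times 2, 1)$-splitting BIBD for every design-group size $|G|$ occurring. Because the ingredient BIBDs themselves must satisfy the necessary condition $2|G|+1 \equiv 1 \bmod 48$, we are restricted to $|G| \equiv 0 \bmod 24$. The smallest usable design-group size is therefore $24$, which requires a $(49, 4 \times 2, 1)$-splitting BIBD as ingredient; since this is precisely the case we are trying to avoid, I would instead rely on design groups of size $48, 72, 96, \dots$, whose ingredients are $(97, 4 \times 2, 1)$, $(145, 4 \times 2, 1)$, $(193, 4 \times 2, 1)$, \dots splitting BIBDs.

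I would first assemble a small stock of direct constructions for these ingredients. For each relevant $v \in \{97, 145, 193, \dots\}$, produce a $(v, 4 \times 2, 1)$-splitting BIBD from base blocks over an abelian group of order $v$ (the standard cyclotomic/external-difference-family approach from \cite{GMW,Ogata2004}); by Lemma \ref{group.lem} every such design is automatically equitably ordered. Then apply Theorem \ref{general.thm} with $4$-GDDs of type $48^u$, $72^u$, $96^u$, or suitable mixed types $48^{u_1} 72^{u_2} \cdots$, whose existence follows from the standard $4$-GDD existence theory (e.g.\ the Brouwer--Hanani--Schrijver results), to realize an equitably ordered splitting BIBD on $v = 48s + 1$ points for large $s$.

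The main obstacle is covering \emph{every} residue of $s$ modulo the admissible GDD parameters using only the ingredient sizes for which we have direct constructions, i.e.\ avoiding the forbidden $|G| = 24$ case throughout. This is the usual small-case gap-filling in recursive combinatorial design theory: one must verify that the union of $\{96u + 1 : u \geq u_0\}$, $\{144u + 1 : u \geq u_0\}$, and so on (together with any additional direct constructions for sporadic values of $s$) exhausts all $v \equiv 1 \bmod 48$ with $v \geq 97$. I would expect this to require a finite case analysis, supplying direct base-block constructions for whichever finitely many values of $v$ fall outside the reach of the recursive scheme, and leaving $v = 49$ as the single value for which neither a direct nor a recursive construction is available with the tools at hand.
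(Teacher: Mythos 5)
Your plan follows the same overall strategy as the paper: necessity from (\ref{v.eq}), and sufficiency via Theorem~\ref{general.thm} applied to $4$-GDDs on $24s$ points whose design-group sizes are multiples of $24$ strictly larger than $24$ (to avoid needing the unknown $(49,4\times 2,1)$ ingredient), with the ingredient splitting BIBDs made equitably ordered through Lemma~\ref{group.lem}. However, as written the proposal has genuine gaps, and they are exactly where the real content of the proof lies. First, you never establish the existence of the ingredient designs: the claim that $(97,4\times 2,1)$, $(145,4\times 2,1)$, etc.\ splitting BIBDs ``can be produced from base blocks over an abelian group by the standard cyclotomic/external-difference-family approach'' is an expectation, not a proof. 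The paper takes these as concrete inputs: \cite{GMW} gives base-block constructions over groups of $(48s+1,4\times 2,1)$-splitting BIBDs for $s=2,3,4,5,6,7$ and $9$, and these (via Lemma~\ref{group.lem}) supply both the small cases outside the recursion and the ingredients $|G|=48$ and $|G|=72$ inside it. Without them your recursion has no base cases.

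Second, your GDD supply is underspecified in a way that matters. Uniform types $48^u$, $72^u$, $96^u$ cover only $24s$ with $s$ even, $s\equiv 0 \bmod 3$, or $s \equiv 0 \bmod 4$ respectively, so every $s$ that is odd and coprime to $3$ (e.g.\ $s=11,13$) forces a mixed type, and you cite no existence result for those. The paper resolves this with a specific construction: a $4$-GDD of type $12^{(s-3)/2}18^1$ exists for all odd $s\geq 11$ by \cite{GL}, and giving every point weight $4$ via the Fundamental GDD Construction (\cite{handbookCD}, Section~IV.2.1) yields a $4$-GDD of type $48^{(s-3)/2}72^1$; together with type $48^{s/2}$ for even $s\geq 8$ from \cite{BSH}, this covers $s=8$ and all $s\geq 10$, leaving precisely $s=2,\dots,7,9$ to the direct \cite{GMW} constructions and $s=1$ as the stated possible exception. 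Your deferred ``finite case analysis'' is exactly this bookkeeping; until the specific ingredient constructions and GDD existence results are pinned down (or replaced by explicit constructions of your own), the sufficiency direction is not proved.
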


\begin{proof}
The necessary condition $v \equiv 1 \bmod 48$ follows from (\ref{v.eq}). 
Let $v = 48s+1$. 
A $(49, 4 \times 2,1)$-splitting BIBD is not known to exist, so we cannot handle the case $s=1$. 
For $s = 2,3,4,5,6,7$ and $9$, $(48s+1, 3 \times 2,1)$-splitting BIBDs are given in 
\cite{GMW} that are generated from base blocks over groups. Therefore, using Lemma \ref{group.lem}, we have
equitably ordered splitting BIBDs for these values of $s$.

For $s=8$ and $s \geq 10$, we use $4$-GDDs on $24s$ points with design group sizes divisible by, and greater than, $24$. 
A $4$-GDD of type $48^{s/2}$ exists for all even $s \geq 8$ (see \cite{BSH}). A $4$-GDD of type $12^{(s-3)/2}18^1$ exists for all odd $s \geq 11$ (see \cite{GL}). Giving weight $4$ to every point and applying the Fundamental GDD Construction (\cite{handbookCD} Section~IV.2.1), we obtain a $4$-GDD of type $48^{(s-3)/2}72^1$  for all odd $s \geq 11$.  Hence, from Theorem \ref{general.thm}, we obtain a 
$(48s+1, 4 \times 2,1)$-splitting BIBD that is equitably ordered, for $s=8$ and for all $s \geq 10$. 
\end{proof}

\begin{theorem}
There exists a $(v, 3 \times 3,1)$-splitting BIBD that is equitably ordered if and only if
$v \equiv 1 \bmod 54$, with the possible exception of  $v = 55$.
\end{theorem}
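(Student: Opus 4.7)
The plan is to imitate the proofs of the two preceding theorems. Necessity is immediate from (\ref{v.eq}): with $m = 3$ and $c = 3$ we have $m(m-1)c^2 = 54$, so any equitably ordered $(v, 3 \times 3, 1)$-splitting BIBD must satisfy $v \equiv 1 \bmod 54$. For sufficiency I would write $v = 54s+1$ and treat cases on $s$, leaving $s = 1$ (i.e.\ $v = 55$) as the allowed exception.

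For small $s \geq 2$, I would appeal to direct constructions of $(54s+1, 3\times 3, 1)$-splitting BIBDs generated from base blocks over an abelian group of order $54s+1$, as presented in \cite{GMW} or constructible by analogous methods. Lemma \ref{group.lem} then automatically yields the equitable ordering, which should handle several initial cases such as $s = 2, 3, 4$.

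For the remaining, larger values of $s$, I would invoke Theorem \ref{general.thm} with $m = c = 3$. This requires a $3$-GDD on $(v-1)/c = 18s$ points in which every design group $G$ satisfies $|G| \equiv 0 \bmod 18$ and admits an equitably ordered $(3|G|+1, 3 \times 3, 1)$-splitting BIBD. Because $|G| = 18$ would demand the undetermined $(55, 3 \times 3, 1)$-ingredient, I am restricted to GDDs whose design groups all have size at least $36$. A $3$-GDD of type $36^u$ ($u \geq 3$), which can be obtained by giving weight $12$ to a $3$-GDD of type $3^u$ via the Fundamental GDD Construction, combined with an equitably ordered $(109, 3 \times 3, 1)$-splitting BIBD from the previous step, would cover $s = 2u$; mixed-type $3$-GDDs such as $36^u 54^1$ or $36^u 72^1$ can fill in the remaining residue classes.

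The main obstacle will be assembling enough $3$-GDDs with design group sizes at least $36$ (from the combinatorial design literature) together with enough direct base-block constructions to cover every $s \geq 2$. This mirrors the extra care required in the proof for $(v, 4 \times 2, 1)$-splitting BIBDs, where the exception $v = 49$ also forced a restriction on admissible design group sizes, and the bookkeeping to verify that every residue of $s$ modulo the relevant period is handled will be the most delicate part of the argument.
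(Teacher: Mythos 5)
Your overall strategy is the same as the paper's: necessity from (\ref{v.eq}), direct base-block constructions made equitable via Lemma \ref{group.lem} for small $s$, and Theorem \ref{general.thm} applied to $3$-GDDs on $18s$ points whose design groups have size divisible by $18$ and at least $36$ (to avoid the unknown $(55,3\times 3,1)$ ingredient). However, as it stands the sufficiency argument has a concrete gap at $s=5$ and $s=7$. For $s=5$ there are only $90$ points, which cannot be partitioned into three or more design groups of size at least $36$, so no admissible $3$-GDD exists at all. For $s=7$ the only candidate type with groups in $\{36,54,72,\dots\}$ and at least three groups is $36^2 54^1$, which fails the necessary condition $m \leq g(u-1)$ for a $3$-GDD of type $g^u m^1$ (here $54 > 36$); indeed the types $36^{(s-3)/2}54^1$ only exist for odd $s \geq 9$ \cite{CHR}. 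So the recursion covers exactly even $s \geq 6$ and odd $s \geq 9$, and the values $s = 5, 7$ must be supplied by direct constructions, which your plan does not secure: you list only ``$s = 2,3,4$'' as the cases handled directly, and you point to \cite{GMW}, which does not contain the needed $3\times 3$ designs. The paper closes this gap by taking the $(54s+1, 3\times 3,1)$-splitting BIBDs generated from base blocks over groups given by Wang \cite{Wang} for $s = 2,3,4,5,7$, which become equitably ordered via Lemma \ref{group.lem}.

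Two smaller points: your derivation of type $36^u$ by weighting a $3$-GDD of type $3^u$ is fine but unnecessary, since the existence of $3$-GDDs of types $36^{s/2}$ (even $s\geq 6$) and $36^{(s-3)/2}54^1$ (odd $s \geq 9$) follows directly from \cite{CHR}; and the mixed type $36^u 72^1$ you mention only yields even values of $s$, so it does not help with the odd residue class. With the direct constructions for $s=2,3,4,5,7$ properly sourced from \cite{Wang}, your argument becomes the paper's proof.
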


\begin{proof} For $(v, 3 \times 3,1)$-splitting BIBDs, this result was shown by Wang \cite{Wang}. 
We use a slightly different recursive construction to construct splitting BIBDs that are equitably ordered. 
First, the necessary condition $v \equiv 1 \bmod 54$ follows from (\ref{v.eq}). 
Let $v = 54s+1$. 
A $(55, 3 \times 3,1)$-splitting BIBD is not known to exist, so we cannot handle the case $s=1$. 
For $s = 2,3,4,5$ and $7$, $(54s+1, 3 \times 2,1)$-splitting BIBDs are given in 
\cite{Wang} that are generated from base blocks over groups. Therefore, using Lemma \ref{group.lem}, we have
equitably ordered splitting BIBDs for these values of $s$.

For $s=6$ and $s \geq 8$, we use $4$-GDDs on $18s$ points with design group sizes divisible by, and greater than, $18$. 
A $3$-GDD of type $36^{s/2}$ exists for all even $s \geq 6$, and a $3$-GDD of type $36^{(s-3)/2}54^1$ exists for all odd $s \geq 9$ (see \cite{CHR}). Hence, from Theorem \ref{general.thm}, we obtain a 
$(54s+1, 3 \times 3,1)$-splitting BIBD that is equitably ordered, for $s=6$ and for all $s \geq 8$. 
\end{proof}

\begin{theorem}
There exists a $(v, 3 \times 4,1)$-splitting BIBD that is equitably ordered if and only if
$v \equiv 1 \bmod 96$.
\end{theorem}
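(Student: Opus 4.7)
The plan is to mirror the structure used in the three preceding theorems. Since $m(m-1)c^2 = 3 \cdot 2 \cdot 16 = 96$, necessity of $v \equiv 1 \pmod{96}$ follows immediately from (\ref{v.eq}). For sufficiency, I would write $v = 96s+1$ and split into small and generic values of $s$.

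For small $s$, the idea is to exhibit $(96s+1, 3 \times 4, 1)$-splitting BIBDs that are developments of base blocks over an abelian group of order $v$, so that Lemma~\ref{group.lem} immediately equitably orders them. At minimum the case $s=1$ (i.e.\ $v=97$) must be treated this way, since it will play the role of ingredient in the recursion. Observe also that a $3$-GDD on $(v-1)/c = 24s$ points whose design groups all have size a positive multiple of $m(m-1)c = 24$ requires at least three such design groups, forcing $s \geq 3$. Hence Theorem~\ref{general.thm} is simply unavailable for $s=1$ and $s=2$, and any such $s$ must be handled directly. For $v=97$ a single base block over $\zed_{97}$ is required (since $(v-1)/96 = 1$), and for $v=193$ two base blocks over $\zed_{193}$ are required, so a direct search or literature reference is feasible.

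For $s \geq 3$, I would apply Theorem~\ref{general.thm} using a $3$-GDD of type $24^s$ on $24s$ points. Uniform $3$-GDDs of type $g^u$ exist whenever $u \geq 3$ and the divisibility conditions $g(u-1)\equiv 0 \pmod{2}$ and $g^2u(u-1)\equiv 0 \pmod{6}$ hold; both are trivially satisfied for $g=24$ and any $u \geq 3$. Each design group has size $24$, so the only splitting BIBD ingredient demanded by Theorem~\ref{general.thm} is an equitably ordered $(4\cdot 24 + 1, 3 \times 4,1) = (97, 3 \times 4, 1)$-splitting BIBD, which is exactly the $s=1$ design already secured. Combining the GDD with this ingredient delivers an equitably ordered $(96s+1, 3 \times 4,1)$-splitting BIBD for every $s \geq 3$.

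The main obstacle is securing the ingredients for $s=1$ and $s=2$, since the recursive machinery is useless below $s=3$. For $v = 193$ one might alternatively hope to use a mixed $3$-GDD such as type $24^a 48^b$ produced by the Fundamental GDD Construction with weight $c=4$ applied to a smaller $3$-GDD on $12$ or $6$ points, but this still reduces to finding a suitable starting design; so in practice everything hinges on the availability of the direct base-block constructions for $v=97$ and $v=193$. If any of them cannot be found, the theorem would need to be qualified with a possible exception, paralleling $v=49$ and $v=55$ in the preceding two theorems.
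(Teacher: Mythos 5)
Your proposal follows essentially the same route as the paper: necessity from (\ref{v.eq}), Lemma~\ref{group.lem} to equitably order direct base-block constructions for small $s$, and Theorem~\ref{general.thm} with a $3$-GDD of type $24^s$ (whose design groups of size $24$ demand exactly the equitably ordered $(97, 3\times 4, 1)$ ingredient) for all $s \geq 3$. The one piece you leave open is the actual existence of $(97, 3\times 4, 1)$- and $(193, 3\times 4, 1)$-splitting BIBDs generated by base blocks over $\zed_{97}$ and $\zed_{193}$; you hedge that an exception might be needed if they cannot be found. The paper closes precisely this gap by citing the constructions of Wang and Su \cite{WS}, which supply such base-block designs for $s = 1, 2$, so the result holds with no exceptions (in contrast to the $(4,2)$ and $(3,3)$ cases, where $v=49$ and $v=55$ remain open). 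Everything else in your argument, including the observation that the recursion is structurally unavailable below $s=3$, matches the paper's reasoning.
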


\begin{proof}
The necessary condition $v \equiv 1 \bmod 96$ follows from (\ref{v.eq}). We prove sufficiency using the same approach as \cite{WS}.
Let $v = 96s+1$. For $s = 1,2$, $(96s+1, 3 \times 2,1)$-splitting BIBDs are given in 
\cite{WS} that are generated from base blocks over groups. Therefore, using Lemma \ref{group.lem}, we have
equitably ordered splitting BIBDs $s= 1,2$.

For $s\geq 3$, we proceed as follows.
A $3$-GDD of type $24^s$ exists for all $s \geq 3$. From Theorem \ref{general.thm}, 
we obtain a $(96s+1, 3 \times 4,1)$-splitting BIBD that is equitably ordered.
\end{proof}
In \cite[p.\ 674]{GMW}, Ge, Miao and Wang proved an asymptotic existence theorem for splitting BIBDs. We outline their approach now. First, they observed that a $(v, m \times c,1)$-splitting BIBD is equivalent to a decomposition of the complete graph $K_v$ into copies of $G$, where $G$ is the complete multipartite graph having $m$ parts of size $c$. Then the following result is an immediate of Wilson's theory of  ``graph designs.'' 

\begin{theorem}
\cite{GMW}
For fixed integers $m$ and $c$, there is an integer $v_{m,c}$ such that, for $v > v_{m,c}$, a $(v, m \times c,1)$-splitting BIBD exists if and only if  $v-1 \equiv 0 \bmod (m-1)c$ and  $v(v-1) \equiv 0 \bmod m(m-1)c^2$.
\end{theorem}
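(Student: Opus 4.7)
The plan is to formally establish the correspondence between splitting BIBDs and graph decompositions sketched in the paragraph preceding the theorem, and then to invoke Wilson's theorem on $G$-designs as a black box. Let $G = K_{c,c,\ldots,c}$ be the complete multipartite graph with $m$ parts of size $c$. First I would verify the following equivalence: a $(v, m \times c, 1)$-splitting BIBD on point set $\mathcal{V}$ is precisely a decomposition of the edge set of the complete graph $K_v$ (with vertex set $\mathcal{V}$) into edge-disjoint copies of $G$. Indeed, each block $B = (B_1, \ldots, B_m)$ with $|B_j| = c$ determines a copy of $G$ on $\bigcup_j B_j$ whose $m$ parts are the sets $B_j$, and the defining property that every unordered pair of distinct points lies in $B_j$ and $B_{j'}$ for some $j \neq j'$ in exactly one block translates exactly into saying that every edge of $K_v$ is covered by exactly one such copy of $G$ (edges between points in the same $B_j$ are never covered, but no such edges need to be covered either, as they do not arise from pairs from distinct parts).

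Next I would invoke Wilson's theorem on $G$-decompositions: for any fixed simple graph $G$ there is an integer $v_0(G)$ such that, for every $v > v_0(G)$, a $G$-decomposition of $K_v$ exists if and only if
\[
v - 1 \equiv 0 \pmod{\gcd\{\deg_G(x) : x \in V(G)\}} \qquad \text{and} \qquad v(v-1) \equiv 0 \pmod{2\,|E(G)|}.
\]
The final step is to compute these two arithmetic quantities for $G = K_{c,c,\ldots,c}$. Every vertex of $G$ has degree $(m-1)c$, so the gcd of the vertex degrees is exactly $(m-1)c$. The edge count is $|E(G)| = \binom{m}{2} c^2 = m(m-1)c^2/2$, so $2|E(G)| = m(m-1)c^2$. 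Substituting into Wilson's two congruences gives precisely
\[
v - 1 \equiv 0 \pmod{(m-1)c} \qquad \text{and} \qquad v(v-1) \equiv 0 \pmod{m(m-1)c^2},
\]
which are the two conditions asserted in the theorem; setting $v_{m,c} := v_0(G)$ completes the argument.

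The main obstacle is really Wilson's theorem itself, which is used as a black box here; the rest of the proof is routine translation. The one small point to be careful about is the direction of the equivalence between splitting BIBDs and $G$-decompositions: one must verify that the edges of $K_v$ not covered by $G$-copies in the decomposition are exactly the edges within individual $B_j$'s, and that the splitting condition of the BIBD forbids precisely these intra-part pairs from ever being covered, so that the two combinatorial objects carry exactly the same information.
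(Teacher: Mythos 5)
Your route is the same one the paper attributes to Ge, Miao and Wang: identify a $(v, m \times c, 1)$-splitting BIBD with a decomposition of $K_v$ into copies of the complete multipartite graph $G$ with $m$ parts of size $c$, then apply Wilson's theorem on $G$-designs; your computation of the two parameters, $\gcd$ of degrees $=(m-1)c$ and $2|E(G)| = m(m-1)c^2$, is correct and yields exactly the stated congruences.

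However, your account of the equivalence contains a genuine error. You assert that edges inside a part $B_j$ ``are never covered,'' and in your closing paragraph that the $G$-copies fail to cover precisely the intra-part edges of $K_v$ and that ``the splitting condition of the BIBD forbids precisely these intra-part pairs from ever being covered.'' That is not what the definition says. The splitting condition requires \emph{every} pair of distinct points --- including a pair that happens to lie inside the same part of some block --- to occur in \emph{distinct} parts of exactly one block; co-occurrence within a part is simply not counted and is not forbidden. For instance, in the $(25, 3 \times 2, 1)$-splitting BIBD generated from the base block $(\{0,1\},\{2,4\},\{12,20\})$ over $\mathbb{Z}_{25}$, the pair $\{0,1\}$ lies inside one part of the base block, yet it is split across parts in the translate $(\{24,0\},\{1,3\},\{11,19\})$, and in exactly one block overall. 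Consequently every edge of $K_v$ is covered exactly once by the copies of $G$, i.e., a splitting BIBD is a decomposition of the \emph{whole} of $K_v$, not of $K_v$ with intra-part edges deleted. This matters: if your picture were correct, the decomposed graph would not be $K_v$ and Wilson's theorem, as you invoke it, would not apply to give the stated conditions. With the equivalence stated correctly (your first, unparenthesized sentence already has it right), the rest of your argument goes through and coincides with the paper's.
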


For equitably ordered splitting BIBDs, it is possible to use a recent extension of Wilson's theory due to Bowditch and Dukes \cite{BD} to obtain a similar asymptotic existence result. The paper \cite{BD} considers \emph{balanced graph decompositions} in which the graph $G$ is allowed to contain coloured loops. We start with the complete multipartite graph having $m$ parts of size $c$; however, we define $G$ by adding a loop having colour $i$ to every vertex in the $i$th part, for $1 \leq i \leq m$ (where we arbitrarily number the parts from $1$ to $m$). We also modify $K_v$ by placing $r/m$ loops of each of the $m$ colours at each vertex (where $r = (v-1)/ (c(m-1))$. It is not hard to see that a $G$-decomposition of the modified $K_v$ is equivalent to an equitably ordered  $(v, m \times c,1)$-splitting BIBD. Then \cite[Theorem 1.2]{BD} yields the following result.

\begin{theorem}
For fixed integers $m$ and $c$, there is an integer $v_{m,c}$ such that, for $v > v_{m,c}$, an equitably ordered $(v, m \times c,1)$-splitting BIBD exists if and only if  $v-1 \equiv 0 \bmod m(m-1)c^2$.
\end{theorem}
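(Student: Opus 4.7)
The plan is to follow the outline given in the paragraph immediately preceding the theorem: translate the existence of an equitably ordered $(v,m\times c,1)$-splitting BIBD into a coloured-loop graph decomposition problem, and then invoke \cite[Theorem 1.2]{BD}. Necessity of the congruence is already the content of (\ref{v.eq}), so only sufficiency for large $v$ needs to be treated.

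First I would set up the two graphs precisely. Let $G$ be the complete multipartite graph with parts $V_1,\dots,V_m$ of size $c$, together with a loop of colour $i$ placed at each vertex of $V_i$. Let $K_v^*$ denote the graph obtained from $K_v$ by adding, at each vertex, exactly $r/m$ loops of colour $i$ for every $i\in\{1,\dots,m\}$, where $r=(v-1)/((m-1)c)$. The quantity $r/m$ is an integer exactly when $v-1\equiv 0\bmod m(m-1)c^2$, so the modified host graph is well defined precisely under the hypothesised congruence.

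Next I would verify the equivalence between $G$-decompositions of $K_v^*$ and equitably ordered $(v,m\times c,1)$-splitting BIBDs. Given such a decomposition, each copy of $G$ yields a block whose $j$th subset is the image of $V_j$; since the non-loop edges of the copies of $G$ partition the edges of $K_v$, the defining incidence property of a splitting BIBD with $\lambda=1$ is immediate. For equitability, each copy of $G$ containing a vertex $x$ in its $j$th part consumes exactly one colour-$j$ loop at $x$, so the number of blocks in which $x$ occupies position $j$ equals the number of colour-$j$ loops at $x$ in $K_v^*$, namely $r/m$, independent of $j$. The reverse direction assigns colours and loops by the same correspondence, producing a $G$-decomposition from any equitably ordered splitting BIBD.

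Finally I would apply \cite[Theorem 1.2]{BD}. Once the equivalence is in place, that theorem provides a threshold $v_{m,c}$ above which the required $G$-decomposition of $K_v^*$ exists whenever the natural divisibility and degree conditions hold, and in our setup these conditions collapse to the single requirement $v-1\equiv 0\bmod m(m-1)c^2$. The main obstacle I anticipate is matching the ``balanced'' hypotheses of the Bowditch--Dukes framework exactly: one must verify that $G$ together with its loop colouring qualifies as an admissible graph with coloured loops in their sense, and that the edge and loop counts of $K_v^*$ satisfy their integrality conditions, so that no arithmetic obstruction beyond the above congruence is generated by their theorem.
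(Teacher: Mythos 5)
Your proposal follows the paper's own argument essentially verbatim: the same coloured-loop graph $G$ (complete multipartite with parts of size $c$ and a colour-$i$ loop at each vertex of the $i$th part), the same modified $K_v$ with $r/m$ loops of each colour at every vertex, the same equivalence with equitably ordered $(v,m\times c,1)$-splitting BIBDs, and the same appeal to \cite[Theorem 1.2]{BD}, with necessity supplied by (\ref{v.eq}). (One small imprecision: $r/m$ is an integer already when $v-1\equiv 0 \bmod m(m-1)c$; the full congruence modulo $m(m-1)c^2$ is what the combined divisibility conditions of the Bowditch--Dukes theorem reduce to, not merely the integrality of $r/m$.)
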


\section{Discussion and Conclusion}
Theorems~\ref{thm:devisacode} and \ref{thm:acodetoamd} show that a weak AMD code is in fact a special case of a group-generated authentication code for uniformly distributed sources.  The fact that these authentication codes have perfect secrecy gives a new perspective on the potential context in which an AMD code might be applied.  The traditional description of a weak AMD code involves an adversary who is unable to see an encoded message, but who can add a group element to that unknown message.  Thus an AMD code can only be applied in a context where these rather specific properties arise.  When treating the AMD code as an authentication code with perfect secrecy, it can be applied in any context where an authentication code might be useful.  Here the adversary sees the encoded message, but it is independent of the source and hence provides no information about the source. 

This perspective also allows us to identify those properties of an AMD code that do not hold for more general authentication codes.  For example, choosing a group element to add to the encoded message defines a substitution strategy for the authentication code.  We observe that each substitution strategy arising this way has the property that the probability of its success conditioned on the event of the key being $k$ is the same for all $k\in \mathcal{ K}$.  This property could facilitate an analysis of success probabilities of substitution attacks in a context where the adversary learns partial information about the choice of key, for example.

We have established that group-generated splitting set systems in general, and AMD codes in particular, are useful classes of splitting authentication codes with perfect secrecy.   It is interesting to see whether they can be further exploited in the construction of splitting authentication codes with perfect secrecy that achieve optimal or near-optimal security against substitution attacks, and whether an explicit focus on the perfect secrecy property can inspire new applications for AMD codes.

%For acknowledgements section, please don't number the section, please begin it with \section*{Acknowledgements}
%\section*{Acknowledgments} We would like to thank you for \textbf{following
%the instructions above} very closely in advance. It will definitely
%save us lot of time and expedite the process of your paper's
%publication.

% You may incorporate your references as follows in your main tex file.
% Using BibTex is not recommended but can be handled.
\section*{Acknowledgements} 
We thank Peter Dukes for bringing his paper \cite{BD} to our attention and for suggesting that it can be used to prove asymptotic existence of equitably ordered splitting BIBDs.


\begin{thebibliography}{99}

\bibitem{Blundo99}
C.~Blundo, A.~De~Santis, K.~Kurosawa, and W.~Ogata.
\newblock On a fallacious bound for authentication codes.
\newblock {\em J. Cryptol.}, 12(3):155–159, 1999.

\bibitem{BD} 
F.C.~Bowditch and P.J.~Dukes. 
\newblock Local balance in graph decompositions. 
\newblock Preprint.

\bibitem{BSH}
A.~Brouwer, A.~Schrijver, and H.~Hanani.
\newblock Group divisible designs with block-size four.
\newblock {\em Discrete Math.}, 20:1 -- 10, 1977.

\bibitem{handbookCD}
C.~J. Colbourn and J.~H. Dinitz.
\newblock {\em Handbook of Combinatorial Designs, Second Edition (Discrete
  Mathematics and Its Applications)}.
\newblock Chapman and Hall/CRC, 2006.

\bibitem{CHR}
C.~J. Colbourn, D.~G. Hoffman, and R.~Rees.
\newblock A new class of group divisible designs with block size three.
\newblock {\em J. Combin. Theory, Series A}, 59(1):73 -- 89, 1992.

\bibitem{cramer08}
R.~Cramer, Y.~Dodis, S.~Fehr, C.~Padr{\'{o}}, and D.~Wichs.
\newblock Detection of algebraic manipulation with applications to robust
  secret sharing and fuzzy extractors.
\newblock In N.~P. Smart, editor, {\em EUROCRYPT '08}, volume 4965 of {\em
  LNCS}, pages 471--488. Springer, 2008.

\bibitem{GL}
G.~Ge and A.~C. Ling.
\newblock Group divisible designs with block size four and group type $g^um^1$
  for small $g$.
\newblock {\em Discrete Math.}, 285(1):97 -- 120, 2004.

\bibitem{GMW}
G.~Ge, Y.~Miao, and L.~Wang.
\newblock Combinatorial constructions for optimal splitting authentication
  codes.
\newblock {\em SIAM J. Discrete Math.}, 18(4):663--678, 2005.

\bibitem{Huber12}
M.~Huber.
\newblock Information theoretic authentication and secrecy codes in the
  splitting model.
\newblock In {\em 22nd International Zurich Seminar on Communications (IZS)}.
  Eidgenössische Technische Hochschule Z\"{u}rich, 2012.

\bibitem{Ogata2004}
W.~Ogata, K.~Kurosawa, D.~R. Stinson, and H.~Saido.
\newblock New combinatorial designs and their applications to authentication
  codes and secret sharing schemes.
\newblock {\em Discrete Math.}, 279(1):383 -- 405, 2004.
\newblock In Honour of Zhu Lie.

\bibitem{oldamd}
M.~B. Paterson and D.~R. Stinson.
\newblock Combinatorial characterizations of algebraic manipulation detection
  codes involving generalized difference families.
\newblock {\em Discrete Math.}, 339(12):2891--2906, 2016.

\bibitem{stinpat19}
M.~B. Paterson and D.~R. Stinson.
\newblock On the equivalence of authentication codes and robust (2,
  2)-threshold schemes.
\newblock {\em J.\ Math.\ Cryptol.}, 15(1):179--196, 2021.

\bibitem{Simmons84}
G.~J. Simmons.
\newblock Authentication theory/coding theory.
\newblock In G.~R. Blakley and D.~Chaum, editors, {\em CRYPTO '84}, volume 196
  of {\em LNCS}, pages 411--431. Springer, 1984.

\bibitem{DeSoete91}
M.~D. Soete.
\newblock New bounds and constructions for authentication/secrecy codes with
  splitting.
\newblock {\em J. Cryptol.}, 3(3):173--186, 1991.

\bibitem{Stinson88}
D.~R. Stinson.
\newblock Some constructions and bounds for authentication codes.
\newblock {\em J. Cryptol.}, 1(1):37--52, 1988.

\bibitem{Stinson1990}
D.~R. Stinson.
\newblock The combinatorics of authentication and secrecy codes.
\newblock {\em J. Cryptol.}, 2(1):23--49, 1990.

\bibitem{Wang}
J.~Wang.
\newblock A new class of optimal 3-splitting authentication codes.
\newblock {\em Des. Codes, Cryptogr.}, 38(3):373--381, 2006.

\bibitem{WS}
J.~Wang and R.~Su.
\newblock Further results on the existence of splitting {BIBD}s and application
  to authentication codes.
\newblock {\em Acta Appl. Math.}, (3):791--803, 2010.
\end{thebibliography}
\end{document}